\documentclass[a4paper,UKenglish,cleveref, autoref, thm-restate]{lipics-v2021}

\nolinenumbers
\hideLIPIcs

\title{Approximation Algorithms for Graphic Multi-Depot TSP and Graphic Multi-Path TSP} 

\usepackage{algorithm}
\usepackage{algpseudocode}
\usepackage{mathtools}
\usepackage{tikz}
\usetikzlibrary{arrows.meta}
\usepackage{tikzit}

\tikzstyle{Normal Node}=[font={\scriptsize}, fill=white, draw=black, shape=circle, inner sep=0.1111em, minimum size=1em, scale=0.8]
\tikzstyle{Depot}=[fill=white, draw=black, shape=rectangle, inner sep=0cm, minimum size=8pt, font={\scriptsize}, scale=0.8]
\tikzstyle{DepotGray}=[fill={rgb,255: red,191; green,191; blue,191}, draw=black, shape=rectangle, inner sep=0cm, minimum size=8pt, font={\scriptsize}, scale=0.8]
\tikzstyle{Invis}=[fill=none, draw=none, shape=circle, inner sep=0cm, minimum width=0pt, minimum height=0pt, font={\scriptsize}, scale=0.8]
\tikzstyle{Gray}=[fill={rgb,255: red,191; green,191; blue,191}, draw=black, shape=circle, scale=0.8]
\tikzstyle{Label}=[font={\footnotesize}, text=black, scale=0.8]
\tikzstyle{LabelScriptsize}=[font={\scriptsize}, text=black, scale=0.8]
\tikzstyle{Center}=[fill=rwth-red, draw=black, shape=circle, scale=0.8]

\tikzstyle{Normal Edge}=[draw=black, fill=none, -, line width=1.1pt]
\tikzstyle{Join}=[-, draw=rwth-blue, densely dotted, line width=1.1pt]
\tikzstyle{Opt}=[-, draw=rwth-red, style=densely dotted, line width=1.1pt]
\tikzstyle{Tree}=[-, draw=rwth-green, style=dashed, line width=1.1pt]
\tikzstyle{Invis}=[-, draw=white, line width=1.1pt]
\tikzstyle{EvenCover}=[-, draw=rwth-purple, dashed, line width=1.1pt]
\tikzstyle{DoubleEdges}=[-, draw=rwth-orange, densely dotted, line width=1.1pt]
\tikzstyle{UnusedEdge}=[draw=black, fill=none, -, line width=1.1pt, draw opacity=0.2]
\tikzstyle{Removable1}=[-, draw=rwth-red, style=densely dotted, line width=1.1pt]
\tikzstyle{PairingArrow}=[<->, line width=1.1pt]

\input{img/main.tikzdefs}
\usepackage{subcaption}

\usetikzlibrary{calc, backgrounds}

\newcommand{\parallelEdge}[4][]{%
  \path ($(#2)!#4!90:(#3)$) edge[#1] ($(#3)!#4!-90:(#2)$);%
}

\usepackage{booktabs,multirow,float}

\usetikzlibrary{calc,arrows,automata,patterns,graphs,shapes,petri,decorations.pathmorphing,decorations.markings,decorations.pathreplacing} 
\usetikzlibrary{decorations.markings,arrows.meta}
\usetikzlibrary{decorations.markings,arrows.meta}
\tikzset{->-/.style={decoration={markings,
  mark=at position .5 with {\arrow{Stealth[scale=0.6]}}},
  postaction={decorate}}}
\tikzset{vertex/.style={draw,circle,inner sep=0pt,minimum size=18pt},>=latex'}
\tikzset{c/.style={draw,circle,inner sep=0pt,minimum size=15pt},>=latex'}
\tikzset{dot/.style={draw,circle,inner sep=0pt,minimum size=3pt, fill=black},>=latex'}

\renewcommand{\phi}{\varphi}

\newcommand{\C}{\mathcal{C}}

\newcommand{\R}{\mathbb{R}}
\renewcommand{\P}{\mathcal{P}}

\DeclareMathOperator{\odd}{odd}

\DeclareMathOperator{\comp}{comp}
\DeclareMathOperator{\uncovered}{uc}

\newcommand{\GraphicTSP}{Graphic TSP\xspace}
\newcommand{\GraphicMultiPath}{Graphic Multi-Path TSP\xspace}

\newcommand{\ProblemNonGraphic}{Multi-Depot TSP\xspace}
\newcommand{\Problem}{Graphic Multi-Depot TSP\xspace}

\newcommand{\ALG}{\text{ALG}}
\newcommand{\OPT}{\text{OPT}}

\newcommand{\disjcup}{\mathbin{\dot{\cup}}}

\newcommand{\evencover}{even cover\xspace}
\newcommand{\evencovers}{even covers\xspace}

\newcommand{\nn}[1]{{#1}_{nn}}
\newcommand{\nd}[1]{{#1}_{nd}}
\newcommand{\dd}[1]{{#1}_{dd}}
\newcommand{\completion}[1]{F({#1})}
\newcommand{\centers}{W}
\newcommand{\double}{d}
\newcommand{\bridges}[1]{\text{Br}({#1})}

\usepackage{xspace}

\usepackage{xcolor}

\definecolor{rwth-blue}{cmyk/RGB/HTML/gray}{1,.50,0,0/0,84,159/00549F/1}
\definecolor{rwth-black}{cmyk/RGB/HTML}{0,0,0,1/0,0,0/000000}%
\definecolor{rwth-magenta}{cmyk/RGB/HTML}{0,1,.25,0/227,0,102/E30066}%
\definecolor{rwth-yellow}{cmyk/RGB/HTML}{0,0,1,0/255,237,0/FFED00}%
\definecolor{rwth-petrol}{cmyk/RGB/HTML}{1,.3,.5,.3/0,97,101/006165}%
\definecolor{rwth-turquoise}{cmyk/RGB/HTML}{1,0,.4,0/0,152,161/0098A1}%
\definecolor{rwth-green}{cmyk/RGB/HTML}{.7,0,1,0/87,171,39/57AB27}
\definecolor{rwth-maygreen}{cmyk/RGB/HTML}{.35,0,1,0/189,205,0/BDCD00}%
\definecolor{rwth-orange}{cmyk/RGB/HTML}{0,.4,1,0/246,168,0/F6A800}%
\definecolor{rwth-red}{cmyk/RGB/HTML}{.15,1,1,0/204,7,30/CC071E}%
\definecolor{rwth-bordeaux}{cmyk/RGB/HTML}{.25,1,.70,.20/161,16,53/A11035}%
\definecolor{rwth-violet}{cmyk/RGB/HTML}{.70,1,.35,.15/97,33,88/612158}%
\definecolor{rwth-purple}{cmyk/RGB/HTML}{.6,.6,0,0/122,111,172/7A6FAC}

\definecolor{darkgreen}{rgb}{0.2,0.8,0.55}

\definecolor{ForestGreen}{rgb}{0.0333,0.4451,0.0333}
\definecolor{DarkRed}{rgb}{0.65,0,0}
\definecolor{Red}{rgb}{1,0,0}

\author{Niklas Dahlmeier}{RWTH Aachen University}{}{}{}
\author{Raoul Luqué}{RWTH Aachen University}{}{}{}
\author{Philipp Pabst}{RWTH Aachen University}{}{}{}
\author{Laura Vargas Koch}{RWTH Aachen University}{}{}{}

\authorrunning{N. Dahlmeier, R. Luqué, P. Pabst and L. Vargas Koch} 

\Copyright{Niklas Dahlmeier, Raoul Luqué, Philipp Pabst and Laura Vargas Koch} 

\ccsdesc[500]{Theory of computation~Routing and network design problems}
\ccsdesc[300]{Mathematics of computing~Approximation algorithms}

\keywords{Graphic TSP, uncapacitated vehicle routing, approximation algorithms} 

\category{} 

\relatedversion{} 

\begin{document}

\maketitle

\begin{abstract}
We give improved approximation algorithms for two graphic TSP variants. First, we consider \ProblemNonGraphic on graphic instances. In this variant, we are given an undirected graph with a set of depot vertices, and we are tasked with finding an Eulerian multi-edge set that connects each non-depot vertex to at least one depot. For this problem, Deppert, Kaul and Mnich \cite{deppertKaulMnich_MDMTSP} presented a $3/2$-approximation algorithm with running time exponential in the number of depots. We improve this result to a fully polynomial $3/2$-approximation. Our algorithm is a variant of the classic Christofides-Serdyukov algorithm with an improved and simple analysis. On subcubic instances, we improve the approximation guarantee to $4/3$ using the technique of removable pairings introduced by Mömke and Svensson \cite{mömke_svensson}. We use a removable pairing to construct a set of disjoint cycles that can be completed to a cheap feasible solution.

Second, we consider the Graphic Multi-Path TSP, which was previously considered by Alimi, Dahlmeier, Mömke, Pabst and Vargas Koch \cite{graphic_multipath_paper} who give a 2-approximation for this problem. Additionally, the authors present a reduction that shows that any better-than-two approximation for \Problem immediately leads to an improvement for Graphic Multi-Path TSP. We give an improved algorithm and analysis that yield an approximation guarantee of $1.9304$.
\end{abstract}

\section{Introduction}
The traveling salesperson problem is a fundamental problem in combinatorial optimization. Given an edge-weighted undirected graph $G=(V,E)$, the most basic version of TSP asks for a shortest tour visiting all vertices at least once. We study \emph{Graphic} TSP, the special case of the problem where each edge has a cost of exactly $1$. There is a long line of work \cite{OveisGharanSaberiSingh2011GraphicTSP, mömke_svensson, mucha2012graphicTSP} which shows that this restriction to unit-weight instances allows for better approximation algorithms, with the best result so far being a $1.4$-approximation due to Seb\H{o} and Vygen \cite{seboVygen2014graphic}.

We study two generalizations of \GraphicTSP. The first is \Problem where, in addition to $G = (V,E)$, a set of depots $D \subseteq V$ is given. The goal is to cover all non-depot vertices with a set of closed tours each rooted at a depot. Formally speaking, we want to find an Eulerian multi-edge set of minimum cardinality such that all non-depot vertices $v\in V \setminus D$ are connected to at least one depot $d \in D$. This problem can also be viewed as a vehicle routing problem without capacities, where the objective is to minimize the total path length. 

\newpage

\vspace*{0.3cm}
\noindent\textsc{\Problem}
\begin{description}
  \item[\textbf{Input:}] An undirected, connected graph $G = (V,E)$ with unit edge costs and depots $D \subseteq V$.
  
  \item[\textbf{Find:}] A minimum-size Eulerian multi-edge set such that all non-depot vertices $v \in V\setminus D$ are connected to at least one depot
$d \in D$.
\end{description}

The most well-known algorithm to solve TSP is the Christofides-Serdyukov algorithm \cite{christofides, serdyukov}. This algorithm computes a spanning tree $T$ and then connects the remaining odd degree vertices with an $\text{odd}(T)$-join. Recall that an $\text{odd}(T)$-join is an edge set where exactly the vertices that have odd degree in $T$ have odd degree within this edge set. 
It is easy to see that the cost of the spanning tree is a lower bound on the optimal solution $\OPT$. The cost of the $\odd(T)$-join can be bounded by $\frac{1}{2} \OPT$ as the optimal tour can be shortcut to a tour just visiting the vertices of $\odd(T)$ and taking every second edge yields two disjoint $\odd(T)$-joins. Combined, this yields an approximation guarantee of $3/2$.
Variants of this simple algorithm and analysis have been applied to several TSP settings in the past \cite{an_kleinberg_shmoys_s_t_path_best_of_many_christofides, traub_t_tours_best_of_many_christofides, 2010_1_5_approx_mdmtsp}.

A natural adaptation of the Christofides--Serdyukov algorithm to \Problem is to compute a spanning forest~$F$ in which every component contains a depot, and then to correct parity via an $\odd(F)$-join. However, the cost of this $\odd(F)$-join can no longer be bounded by $\frac{1}{2}\OPT$ since a connected component of an optimal solution may contain an odd number of vertices of $\odd(F)$. Indeed, Xu, Xu and Rodrigues~\cite{XU2011218} showed that this algorithm has a tight approximation ratio of $2-\frac{1}{|D|}$ for \ProblemNonGraphic.

There is a line of literature improving on this guarantee, but at the cost of a running time exponential in~$|D|$.
Xu and Rodrigues~\cite{2010_1_5_approx_mdmtsp} developed a Christofides--Serdyukov variant that achieves a ratio of $\frac{3}{2}$ for a constant number of depots. Deppert, Kaul and Mnich~\cite{deppertKaulMnich_MDMTSP} improved the running time, however, it remains exponential in the number of depots; they obtain a $(\frac{3}{2}+\varepsilon)$-approximation for general metrics and a $\frac{3}{2}$-approximation for \Problem. Very recently, Zhao, Liu and Xiao~\cite{one_plus_square_root_paper} gave the first polynomial-time algorithm that improves on $2-\frac{1}{|D|}$ for an arbitrary number of depots, achieving a ratio of $1+\frac{1}{\sqrt{2}}\approx 1.707$ for general metrics.
\Cref{tab:state_of_the_art} summarizes these results together with their running times.

\begin{table}[tbp]
  \centering
  \footnotesize
  \setlength{\tabcolsep}{3pt}
  \begin{tabular*}{\columnwidth}{@{\extracolsep{\fill}} l l l l @{}}
    \toprule
    Problem & Reference & Ratio & Running time \\
    \midrule
    \multirow{6}{*}{\ProblemNonGraphic}
      & Xu, Rodrigues \cite{2010_1_5_approx_mdmtsp}          & $3/2$                 & $n^{\mathcal{O}(|D|)}$ \\
      & Xu, Xu, Rodrigues\ \cite{XU2011218}                          & $2-1/|D|$               & polynomial \\
      & Traub, Vygen, Zenklusen\ \cite{reducing_path_tsp_to_tsp}        & $\alpha+\varepsilon$  & $n^{\mathcal{O}(|D|/\varepsilon)}$ \\
      & Deppert, Kaul, Mnich\ \cite{deppertKaulMnich_MDMTSP}       & $3/2+\varepsilon$     & $(1/\varepsilon)^{\mathcal{O}(|D|\log|D|)}\, n^{\mathcal{O}(1)}$ \\
      & Zhao, Liu, Xiao\ \cite{one_plus_square_root_paper}       & $1+1/\sqrt{2}$        & polynomial \\
      & Zhao, Su, Xiao\ \cite{zhaoXuSiao2026betterThanThreeHalf} & $\alpha+\varepsilon$ & $2^{\mathcal{O}(|D|/\varepsilon)}\, n^{\mathcal{O}(1)}$ \\
    \midrule
    \multirow{2}{*}{\Problem}
      & Deppert, Kaul, Mnich\ \cite{deppertKaulMnich_MDMTSP}       & $3/2$                 & $2^{|D|}\, n^{\mathcal{O}(1)}$ \\
      & \textbf{This work}                                   & $\mathbf{3/2}$        & \textbf{polynomial} \\
    \bottomrule
  \end{tabular*}
  \caption{Known results; $\alpha < 3/2$ is the best known approximation ratio for TSP.}
  \label{tab:state_of_the_art}
  \vspace{-0.5cm}
\end{table}

As our first main result, we prove that a variant of the Christofides--Serdyukov algorithm is a $\frac{3}{2}$-approximation for \Problem whose running time is polynomial, independent of the number of depots. This removes the exponential dependence on~$|D|$ from the running time of~\cite{deppertKaulMnich_MDMTSP} and, for graphic instances, improves on the ratio of $1+\frac{1}{\sqrt{2}}$ from~\cite{one_plus_square_root_paper}.

\begin{restatable}{theorem}{gmdtspgeneral}
\label{thm:gmdm-tsp-general}
    There is a polynomial time $\frac{3}{2}$-approximation algorithm for the \Problem.
\end{restatable}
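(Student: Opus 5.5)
The plan is a Christofides--Serdyukov scheme in which all depots are identified into a single vertex, so that parity at depots becomes free during the join step and has to be repaired afterwards.

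\textbf{Lower bound and forest.} Let $n' = |V\setminus D|$. Every connected component of a feasible solution that contains $k\ge 1$ non-depot vertices also contains a depot. Since it is connected and Eulerian on at least $2$ vertices, it has at least $k+1$ edges. Hence
\[
\OPT \ge n' + c(\OPT) \ge n',
\]
where $c(\OPT)$ is the number of non-trivial components of an optimal solution. Contract $D$ into a single vertex $r$ and compute a spanning tree of $G/D$. Pulled back to $G$, this is a forest $F$ with exactly $n'$ edges in which every tree contains a depot. Among all such forests, I would take one in which as many depots as possible are leaves or isolated. This can be arranged by local exchanges, and its only purpose is to simplify the repair step below.

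\textbf{Parity correction in the contracted graph.} Let $T := \odd(F)\setminus D$ and compute a minimum $T$-join $J$ in $G/D$, treating $r$ as a vertex whose parity is unconstrained (if needed, add $r$ to $T$ to make $|T|$ even). After contraction, an optimal solution becomes a connected Eulerian multigraph spanning all non-depot vertices together with $r$. Shortcutting it to the vertices of $T\cup\{r\}$ and splitting the resulting cycle into alternate edges gives two disjoint joins. This is exactly the standard argument, and it works here because the contraction makes $\OPT$ a single tour. Hence $|J|\le \tfrac12\OPT$, and $F\cup J$ has cost at most $\tfrac32\OPT$. Moreover, every non-depot vertex has even degree and is connected to a depot.

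\textbf{Repairing depot parities (main obstacle).} The only remaining defect is that some depots may have odd degree in $F\cup J$. Within each connected component, the number of odd depots is even. My first attempt is to fix them without extra cost by taking the symmetric difference with a path $P$ in $F\cup J$ between two odd depots of the same component. This flips exactly their parities and does not increase the size, since edges of $P$ are either removed or their multiplicity reduced. What has to be verified is that every non-depot vertex is still connected to some depot afterwards. To guarantee this, I would choose $P$ as a shortest depot-to-depot path in $F\cup J$ whose interior contains no depot. After deleting $P$, each piece hanging off an interior vertex of $P$ either contains a depot or remains attached through an even-degree cycle structure, because non-depot vertices have even degree, so every edge at them lies on a cycle or a depot-to-depot path. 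Making this connectivity argument precise, possibly by first rerouting $J$ edges at depots so that $J$ is a forest whose depot endpoints are leaves, is the step I expect to be hardest. If it fails, the fallback is to charge an additional edge per odd pair against the slack $c(\OPT)$ in the lower bound. Every step is a spanning-tree or minimum $T$-join computation, so the running time is polynomial and independent of $|D|$.
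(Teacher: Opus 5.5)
Your bounds before the repair are correct. We have $|F| = n'$ and $|\OPT| \ge n' + c(\OPT)$. Contracting $D$ merges $\OPT$ into one closed walk through $r$, so the minimum $T$-join in $G/D$ satisfies $|J| \le \tfrac12|\OPT|$. The gap is the depot-parity repair, which is exactly the part of the problem that the contraction hides, and neither of your arguments for it holds. Deleting a depot-to-depot path $P$ can disconnect its own interior vertices from every depot. The fallback also fails, because the number of odd depot pairs is not bounded by $c(\OPT)$, so charging one edge per pair to $c(\OPT)$ does not work.

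Here is a concrete instance. Let $G$ be a cycle $q_1 q_2\cdots q_N$ ($N$ even), with a pendant depot $a_i$ attached to each $q_i$. Then $|\OPT| = N+2$ (the cycle plus one doubled pendant edge) and $c(\OPT)=1$. The star $F=\{a_iq_i : i\in[N]\}$ is a spanning tree of $G/D$ in which every depot is a leaf, so your tie-breaking rule allows it, and $T=\{q_1,\dots,q_N\}$. Every minimum $T$-join in $G/D$ is then a perfect matching of cycle edges, say $\{q_{2i-1}q_{2i}\}$. Thus $F\cup J$ costs $\tfrac32 N$ and splits into $N/2$ components $a_{2i-1}\,q_{2i-1}\,q_{2i}\,a_{2i}$, each with two odd depots. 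Rerouting $J$ at depots does nothing here, since $J$ has no depot edges. Your path $P$ is a whole component, so removing it uncovers $q_{2i-1}$ and $q_{2i}$. Any repair that stays inside a component needs at least one extra edge, because the cheapest feasible edge set on such a path has $4$ edges. This gives total cost $2N$ against $|\OPT| = N+2$; only a global re-merging of the components would help. So when your forest may attach one component of $G[V\setminus D]$ to many depots, the contracted join is cheap only because it postpones a parity defect that can cost about $\tfrac12 n'$.

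The paper's proof avoids the repair entirely, in three steps. First, it uses a \emph{depot-rooted} forest: a spanning tree of each component $C_i$ of $G[V\setminus D]$ plus exactly one edge to a depot. Second, it computes the $\odd(F)$-join in $G$ itself, so depot parities are handled by the join. This join is no longer bounded by $\tfrac12|\OPT|$. Instead, the paper uses an auxiliary edge set $A$: the $k$ depot edges of $F$ plus $k^*-k$ edges that reconnect the pieces of $\OPT$ inside each $C_i$. Every component of $\OPT\cup A$ then contains an even number of $\odd(F)$-vertices, which gives $|J|\le\tfrac12|\OPT|+|A|\le\tfrac12|\OPT|+k^*$. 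Third, this surplus is cancelled by the sharper forest bound $|F|\le|\OPT|-k^*$, where $k^*$ counts the components of $\OPT[V\setminus D]$ rather than those of $\OPT$. Your slack $c(\OPT)$ would be too small for this accounting anyway. On the instance above, the paper's forest is a Hamiltonian path of the cycle plus one pendant edge, and the algorithm returns an optimal solution.
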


\newpage
The idea behind our analysis is that, whenever the cost of the odd-join exceeds $\tfrac{1}{2}\OPT$, our spanning forest is strictly cheaper than an optimal solution, and the saved cost for the spanning forest and the additional cost of the odd-join cancel out.

In more detail, in our algorithm we compute a forest $F$ by computing a spanning tree for every component $C_i$ of $G[V\setminus D]$ and connect it to an arbitrary depot. This differs from previous constructions, where components of $G[V \setminus D]$ might have multiple incident edges to depots in $F$. Afterwards, we compute a minimum-cardinality $\text{odd}(F)$-join as a parity correction to obtain a feasible solution. 
To upper bound the cost of this $\text{odd}(F)$-join, we use the notion of \emph{auxiliary edge sets} due to Xu and Rodrigues \cite{2010_1_5_approx_mdmtsp}. That is, we add edges to the optimal solution to obtain an augmented solution, with the property that every connected component of that solution contains an even number of odd-degree vertices of $F$. We can show that the size of this auxiliary edge set corresponds to the number of edges that $F$ saves compared to the optimal solution.

For Graphic TSP, there are approximation algorithms with better guarantees than Christofides-Serdyukov. The first algorithm for \GraphicTSP that improves on the approximation factor of $3/2$ by a sizable amount is due to Mömke and Svensson \cite{mömke_svensson}. Their main innovation is the introduction of the \emph{removable pairing} framework. Here, the basic idea is to construct a set of \emph{removable edges} $R$ that can be removed from the original graph without violating connectedness. This ensures feasibility when adhering to some pairing constraints. 

A natural question is whether this framework extends to the more general \Problem.
We consider this for graphs with maximum degree of at most $3$, i.e., \emph{subcubic} graphs, as they allow for a simpler method of constructing removable pairings. 
Extending the algorithm and analysis of Graphic TSP in the straightforward way requires a stronger form of $2$-connectedness that also accounts for the connected components of $G[V \setminus D]$. This is because the standard assumption of $2$-connectedness for Graphic TSP does not carry over. 
Thus, we develop a new algorithmic approach while still building on the concept of removable pairings. In contrast to Graphic TSP, we start with a well chosen collection of disjoint cycles that we complement to a feasible solution by ensuring the necessary connectivity.

\begin{restatable}{theorem}{gmdmtspsubcubic}
\label{thm:gmdm-tsp-subcubic}
    There is a polynomial time $\frac{4}{3}$-approximation algorithm for \Problem when restricted to subcubic graphs.
\end{restatable}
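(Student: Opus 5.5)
Our plan follows the outline in the introduction. First, we reduce to a convenient structure. Contracting all depots into one vertex does not preserve the problem, because a tour must be closed at a depot. Instead, observe that an optimal solution restricted to each component $C_i$ of $G[V\setminus D]$, together with its depot edges, forms the relevant structure. We may also assume that each $C_i$ has at least one neighbour in $D$, and that $\OPT \ge |V\setminus D|$, since every non-depot vertex needs at least two incident solution edges. We will use the lower bound $\OPT \ge |V \setminus D| + k'$, where $k'$ counts extra edges that the optimum must spend on $2$-edge-connectivity between components and depots.

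Second, we follow the removable pairing framework of Mömke and Svensson adapted to cycles. We compute a maximum-weight or minimum-deficiency \emph{cycle cover}-type structure: a collection $\C$ of vertex-disjoint cycles, each cycle possibly using depot vertices, covering as many non-depot vertices as possible. In subcubic graphs this is computable via a matching/$2$-factor reduction, where each vertex is allowed to be skipped at a penalty. We then complete $\C$ to a feasible solution. For every non-depot vertex not on a cycle, and for every cycle not containing a depot, we add doubled connecting edges along a spanning forest of the contracted graph, in which each cycle is contracted and every component is rooted at a depot. This gives cost $|E(\C)| + 2(\#\text{connections})$.

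Third, we bound the cost against $\tfrac43\OPT$. Here we would use removable pairings. For subcubic $2$-vertex-connected pieces, a Mömke--Svensson style argument (via a DFS tree, back edges, and a circulation LP) yields a distribution over $2$-edge-connected-like structures. In these structures, roughly a $1/3$ fraction of edges is removable. This gives an Eulerian solution of expected size at most $\tfrac43 |V|$ per block. The key lemma we would aim for is the following: there exists a set of disjoint cycles $\C$ with $|E(\C)| + 2\,\comp(\C) \le \tfrac43\OPT$, where $\comp(\C)$ counts the connections needed. We prove this by taking an optimal solution, decomposing it into $2$-edge-connected blocks and bridges (doubled edges). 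On each block we apply the removable pairing to the subcubic subgraph, obtaining a covering by cycles plus cheap bridges. Averaging then gives the $4/3$ factor, and the algorithm's optimum over cycle collections is at most this bound.

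The main obstacle is this third step. The blocks of $\OPT$ need not be $2$-vertex-connected subcubic graphs with the degree-preserving properties required by removable pairings. Moreover, the doubled bridges of $\OPT$ (cost $2$ each) must be charged so that they pay for the connection edges in our completion. We would first try to charge each doubled edge of $\OPT$ against one connection in our completion, and show via the removable-pairing distribution that within each $2$-vertex-connected component of $G[V\setminus D]\cup D$ the expected cycle cost plus connection cost is at most $\tfrac43$ times the number of vertices. Using a DFS tree rooted at a depot handles the depot constraint.
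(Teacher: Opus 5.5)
There is a genuine gap, and it lies in the algorithm itself, not only in the analysis step you flag. Your completion step matches the paper's. Its cost, however, is the number of cycle edges plus twice the number of depot-free pieces, so it depends on how many cycles you use. Your step~2 maximizes coverage, and coverage cannot see that count. For example, put a single depot in the truncation of a Hamiltonian cubic graph on $n$ vertices. The triangle $2$-factor covers every vertex just as a Hamiltonian cycle does, yet completing it costs $\frac53 n-2$ while $|\OPT|=n$. You also cannot minimize the completion cost over cycle collections exactly, because with one depot this decides Hamiltonicity of cubic graphs. So even a proof of your key lemma (existence of disjoint cycles whose completion costs at most $\frac43|\OPT|$) says nothing about what your algorithm outputs.

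The missing idea is to use the removable pairing \emph{inside} the algorithm, as a linear surrogate for the completion cost. Take DFS trees of the components of $\nn{G}$, rooted at vertices adjacent to depots. By property (iii), the non-removable edges outside $S$, plus one paired edge per uncovered center, keep every component of $\nn{G}$ connected together with $\nn{S}$. This bounds $\comp((\nn{V},\nn{S}))$ by an affine function of $S$. It yields the weight $w_R$ with $|\completion{S}|\le 2|\nn{V}|+w_R(S)$ for \emph{every} even cover $S$. A $w_R$-minimal even cover is then computed exactly by a minimum-cost $T$-join with respect to $|w_R|$.

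Your step~3 is still only a sketch, and decomposing $\OPT$ into blocks is, as you suspect, the wrong object. The paper uses the comparison graph $K=(V,\nn{E}\cup(\OPT_1\cap\nd{E})\cup\dd{\OPT})$, where $\OPT_1$ is the set of edges that $\OPT$ uses once. Since $\nn{E}\subseteq E(K)$, both edges of every pair lie in one $2$-connected block. These blocks are vertex-disjoint because $K$ is subcubic, so per-block even covers combine. On each block $B$, the point $x\equiv\frac13$ of the $T$-join polytope yields an $\odd(E(B))$-join whose complement $S_B$ is an even cover with $w_R(S_B)\le\frac23 w_R(E(B))$. Since $w_R(\nn{E})=-|\nn{V}|+k+\ell$, this is where the $\frac43$ comes from. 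No distribution over M\"omke--Svensson structures or circulation LP is needed.

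Each bridge of $K$ is either an edge of $\nn{E}$ that $\OPT$ doubles or a bridge of $\nn{G}$ unused by $\OPT$. Hence $|\bridges{K}|\le\double+(k^*-k)$, which absorbs the weight lost on bridges and on bad pairings. This is matched by the precise lower bound $|\OPT|\ge|\nn{V}|+\double+k^*+|\dd{\OPT}|$, which replaces your unspecified $k'$.
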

 Our algorithm first computes a subset $S$ with $\deg_S(v) \in \{0,2\}$ for all $v\in V$, i.e., a union of disjoint cycles. We call this set an \emph{\evencover} following Wigal, Yoo and Yu \cite{WigalYooYu2021cubic}. 
 The even cover minimizes a weight function whose definition uses the structure of a computed removable pairing. 
 This weight function has two crucial properties. First, a min-cost odd-join for it guarantees the existence of a sufficiently cheap \evencover. Second, it guarantees 
 that we can complement $S$ to a complete, feasible solution $F(S)$ cheaply. 
  To compare $F(S)$ to \OPT, we give an improved lower bound for \OPT.
 The idea of completing even covers to solutions was used previously for Graphic TSP when restricted to cubic (i.e., $3$-regular) and subcubic instances \cite{dvorakKralMohar2017cubic, WigalYooYu2021cubic}.

For subcubic graphs, our guarantee matches the approximation guarantee for Graphic TSP due to Mömke and Svensson \cite{mömke_svensson}.
An exciting follow-up question of our work is whether our approach can be extended to achieve a better-than-$\tfrac{3}{2}$ approximation for \Problem on arbitrary graphs.

As pointed out in \cite{graphic_multipath_paper}, \Problem is a special case of \GraphicMultiPath and the key difficulty in obtaining an approximation factor below 2 for the latter. In \GraphicMultiPath, we are given a collection of source-sink pairs $(s_i, t_i) \in V^2$ for $i \in [k]$. The objective is to find $k$ walks $P_i$ of minimum total length connecting $s_i$ to $t_i$ while ensuring that every vertex is covered by at least one walk $P_i$.
Using \cref{thm:gmdm-tsp-general} we improve on the best known approximation factor for Graphic Multi-Path TSP.

\vspace*{0.3cm}
\noindent\textsc{Graphic Multi-Path TSP}
\begin{description}
  \item[\textbf{Input:}] An undirected, connected graph $G = (V,E)$ with unit edge costs and a set of source-sink pairs $(s_i, t_i)_{i \in [k]}$.
  
  \item[\textbf{Find:}] A set of $s_i$-$t_i$ walks of minimum total cost that jointly cover each vertex at least once.
\end{description}

The best known approximation algorithm for this problem in general metrics achieves an approximation ratio of $(1 + 2e^{-1/2}) < 2.214$ due to Böhm, Friggstad, Mömke and Spoerhase \cite{böhm2026approximatingtravelingsalesmanproblems}. For the graphic case this was recently improved to a 2-approximation due to Alimi, Dahlmeier, Mömke, Pabst and Vargas Koch \cite{graphic_multipath_paper}. In their paper, the authors also give a reduction showing that any better-than-two approximation for \Problem yields an immediate improvement for Graphic Multi-Path TSP as follows. First, compute a solution for \GraphicMultiPath using their LP sampling algorithm. Second, compute a solution for \Problem using $S = \bigcup_{i=1}^k s_i$ as the depots. Afterwards, connect each $s_i$ to $t_i$ using a shortest path and output the better of the two candidate solutions. Naively combining this with the result from \cref{thm:gmdm-tsp-general} yields a 1.992-approximation. However, this reduction is very lossy, and we improve on both the algorithm and the analysis. This leads to the following approximation guarantee.

\begin{restatable}{theorem}{thmmultipath}
\label{thm:multipath_improved}
		There is a polynomial-time algorithm that given an instance of Graphic Multi-Path TSP outputs a solution with cost of at most $\frac{11e-10}{6e-6}\cdot|\OPT\,| < 1.9304\cdot|\OPT\,|$.
\end{restatable}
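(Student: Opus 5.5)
The plan is to run two algorithms and output the cheaper solution, and to bound the better of the two by a convex combination of their guarantees. Fix an optimal solution $\OPT$ with walks $P_1,\dots,P_k$. Write $\beta\,|\OPT| := \sum_{i} d(s_i,t_i)$ for the total distance between the terminals of each pair; clearly $0\le \beta\le 1$. Both guarantees will be expressed as functions of $\beta$ (possibly together with a second parameter describing the structure of $\OPT$). The proof then reduces to showing that the minimum of these functions never exceeds $\frac{11e-10}{6e-6}$.

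\textbf{Depot candidate.} Take depots $D=\{s_1,\dots,s_k\}$ and apply \cref{thm:gmdm-tsp-general}, which returns an Eulerian multi-edge set $H$ with $|H|\le \frac32 \OPT_{MD}$. From $H$ we build the $k$ walks. Each component of $H$ contains some $s_i$. We add a shortest $s_i$-$t_i$ path for every $i$ and split the resulting multigraph into $s_i$-$t_i$ walks, attaching each Eulerian component of $H$ to the walk of a source it contains. This costs $|H|+\beta|\OPT|$.

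To bound $\OPT_{MD}$, close each $P_i$ by a shortest $t_i$-$s_i$ path, which gives $\OPT_{MD}\le(1+\beta)|\OPT|$. I would then try to sharpen this. Instead of closing $P_i$ by a shortest path, cut $P_i$ at a suitable vertex and double the shorter prefix/suffix. Alternatively, one could use the terminals $t_i$ as extra depots, at the price of the parity fix-up being partly absorbed into the join. The sharpened bound should be $\OPT_{MD}\le |\OPT| + \min\{\beta,\cdot\}|\OPT|$, which keeps this candidate below $2$ for small $\beta$. The resulting guarantee is roughly $\bigl(\frac32(1+\beta)+\beta\bigr)|\OPT|$, improved by the refinement above. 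This candidate is good when $\beta$ is small.

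\textbf{LP candidate.} The second algorithm is the LP-sampling algorithm of \cite{graphic_multipath_paper}. The improvement should come from reanalysing it as a function of $\beta$ rather than using its worst case $2$. In that analysis, the sampled walks cost about $\mathrm{LP}\le|\OPT|$. Each vertex is left uncovered with probability at most $e^{-1/2}$ (or $e^{-1}$, depending on the rounding), and uncovered vertices are repaired at unit cost per edge in the graphic setting. The part of the repair cost that does not depend on $\beta$ can then be charged against the $s_i$-$t_i$ distances. I expect the result to be a bound of the form $(a - b\beta)|\OPT|$ with $a,b$ involving $e$. Taking the expectation over the sampling and derandomising by the standard conditional-expectation argument, or by running polynomially many samples and keeping the best, makes this candidate deterministic.

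\textbf{Balancing and main obstacle.} Finally, for each $\beta\in[0,1]$ I compare the increasing bound of the depot candidate with the decreasing bound of the LP candidate. The worst case is where the two are equal, and solving this linear equation in $\beta$ should give exactly $\frac{11e-10}{6e-6}$. The denominator $6(e-1)$ suggests that the LP candidate's bound has slope proportional to $(e-1)/e$, so I would aim the reanalysis at producing that slope.

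The main obstacle is the reanalysis of the LP sampling algorithm, namely getting a dependence on $\beta$ whose slope is large enough to beat the naive combination (which gives $1.992$). A close second is tightening $\OPT_{MD}$ in the depot candidate so that these $s_i$-$t_i$ paths are not paid for twice. I would first try to charge the paths added in both candidates to the same quantity $\beta|\OPT|$, and then check that the final minimisation is tight at an interior value of $\beta$.
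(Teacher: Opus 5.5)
\textbf{There is a genuine gap.} The whole improvement is supposed to come from reanalysing the LP-sampling candidate ``as a function of $\beta$''. That analysis is only anticipated, and your claim that balancing in $\beta$ yields exactly $\frac{11e-10}{6e-6}$ is reverse-engineered rather than derived.

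If you carry it out, the only link between the sample and $\beta$ is that every path in the decomposition of commodity $i$ has length at least $d(s_i,t_i)$. Hence $\sum_v z_v^P\ge\beta|\OPT|$. Greedy reconnection costs $z_v^P+2q_v$ per vertex, with $q_v\le e^{-z_v^P}$, against a budget of $2z_v$. Its saving is at least $(1-\frac{2}{e})z_v^P$, and this is tight at $z_v^P=1$. So the LP candidate costs $(2-(1-\frac{2}{e})\beta)|\OPT|$. Balancing this against your depot candidate $(\frac32+\frac52\beta)|\OPT|$ gives $\beta\approx0.181$ and a ratio of about $1.952$, not $1.9304$.

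The denominator $6(e-1)$ also does not come from a slope proportional to $(e-1)/e$. It comes from $1+\gamma=\frac{3(e-1)}{2e-1}$ with $\gamma=\frac{e-2}{2e-1}$. Structurally, your depot candidate pays for the $s_i$-$t_i$ connections twice: once inside the factor $\frac32$ via $\OPT_{MD}\le(1+\beta)|\OPT|$, and again for the added shortest paths. It also uses no information from the LP sample. The refinements you sketch are left unspecified.

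\textbf{The missing idea} is to build the second candidate from the same sample $\mathcal{P}$. Contract every connected component of the sampled paths into a depot. Each sampled $P_i$ joins $s_i$ to $t_i$, so the contraction identifies $s_i$ with $t_i$, and every walk of $\OPT$ becomes a closed walk through a depot. The auxiliary Graphic Multi-Depot TSP instance therefore has optimum at most $|\OPT|$, with no $\beta$ loss.

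After expanding, only vertices of $V(\mathcal{P})$ can have wrong parity, and each sampled component contains an even number of them. A join inside spanning trees of these components repairs parity at cost at most
\[
|V(\mathcal{P})|-\comp(V(\mathcal{P}),\mathcal{P})\le|V(\mathcal{P})\setminus(S\cup T)|+2\textstyle\sum_{v\in S}\ell_v .
\]
Together with the sampled paths this gives $\mathbb{E}[|F_2|]\le\frac32|\OPT|+\Phi$, where $\Phi=\sum_{v\notin S\cup T}(z_v^P+1-q_v)+3\sum_{v\in S\cup T}z_v^P$.

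The same $\Phi$ controls the greedy candidate. A per-vertex argument, split into the cases $z_v^P\le1$ and $z_v^P\ge1$, yields $\mathbb{E}[|F_1|]\le2|\OPT|-\gamma\Phi$. Balancing $2-\gamma x$ against $\frac32+x$ at $x=\frac{2e-1}{6e-6}$ gives the theorem.

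So the trade-off parameter must be a quantity of the LP sample, in particular the coverage term $1-q_v$. The parameter $\beta$ cannot see how much the sampled paths already cover.
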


We keep the sampling procedure unchanged and introduce a new second candidate solution: By contracting the components induced by the sampled paths, we obtain a \Problem instance whost optimal cost is at most that of the original instance. After solving the \Problem instance using \cref{thm:gmdm-tsp-general}, we only need to correct the parity of some vertices by computing an appropriate join. We then return the better of the two candidate solutions.

\subsection{Related Work}
 TSP and its variants have been studied extensively, from both a practical and theoretical standpoint. In this section, we only give results for variants closely related to \Problem and Graphic Multi-Path TSP and refer to \cite{Traub_Vygen_2024} for an in-depth overview of approximation results for other variants.

For \GraphicTSP, the first algorithm beating the approximation factor of $3/2$ of the Christofides algorithm was presented by Oveis Gharan, Saberi and Singh \cite{OveisGharanSaberiSingh2011GraphicTSP} in 2011. Whereas their result beats $3/2$ only by a tiny factor, the first sizable improvement of the approximation factor is due to Mömke and Svensson \cite{mömke_svensson}, who introduced the concept of removable pairings to obtain a $1.461$-approximation for Graphic TSP. This result was later improved to a $13/9$-approximation by Mucha \cite{mucha2012graphicTSP} and then to a $7/5$-approximation by Seb\H{o} and Vygen \cite{seboVygen2014graphic}.

A simplified version of the algorithm by Mömke and Svensson \cite{mömke_svensson} can be used to obtain a $4/3$-approximation for Graphic TSP on subcubic graphs. 
In \cref{sec:gmdmtsp_subcubic}, we show that we can match this guarantee for \Problem. 
A core idea in \Cref{sec:gmdmtsp_subcubic} is to complete an \emph{\evencover} to a full solution for \Problem. Similar approaches have been used previously to improve the approximation ratio for Graphic TSP on cubic (i.e., $3$-regular) graphs beyond $4/3$, first to a $9/7$-approximation by Dvo{\v{r}}{\'{a}}k, Král and Mohar \cite{dvorakKralMohar2017cubic} and later to a $5/4$-approximation due to Wigal, Yoo and Yu \cite{WigalYooYu2021cubic}.
\section{\texorpdfstring{A $3/2$-Approximation for \Problem}{A 3/2-Approximation for Graphic Multi-Depot TSP}}
\label{sec:gmdmtsp_general}

The main result of this section is a $3/2$-approximation for \Problem that runs in polynomial time for an arbitrary number of depots. Our algorithm is a slight adaptation of Christofides-Serdyukov: It first computes a forest spanning all non-depot vertices and then does a parity correction. In the analysis, we exploit savings in the cost of the forest to counterbalance additional costs in the parity correction step. This is based on counting edges and components, and thus crucially depends on the fact that we have a graphic metric.

\gmdtspgeneral*

\subsection{Description of the Algorithm} 
Let $\mathcal{C}$ denote the set of connected components of the input graph without depots, i.e., of $G[V \setminus D]$. We say an edge set is a \emph{depot-rooted spanning forest}
if it consists of one spanning tree for each component $C_i \in \C$ plus, for each $C_i$, one edge joining $C_i$ to a depot. A depot-rooted spanning forest can easily be computed by computing an arbitrary spanning tree of each component $C_i \in \mathcal{C}$ and attaching the spanning tree to an adjacent depot using one additional edge. This is always possible, as we assume the original graph to be connected.

The algorithm starts by computing a depot-rooted spanning forest $F$ followed by computing a minimum cardinality $\operatorname{odd}(F)$-join $J$ using the entire edge set of $G$. Finally, we return the multi-edge set $F \disjcup J$. An example run of the algorithm is depicted in \Cref{fig:christ}.

\begin{figure}[tbp]
    \centering
    \begin{subfigure}[t]{0.495\textwidth}
        \ctikzfig{img/christofides_a}
        \caption{\Problem instance.}
    \end{subfigure}
    \hfill
    \begin{subfigure}[t]{0.495\textwidth}
        \ctikzfig{img/christofides_b}
        \caption{Depot-rooted spanning forest $F$ and $\odd(F)$-join.}
    \end{subfigure}
    \caption{$3/2$-approximation algorithm applied to an example \Problem instance shown in (a). Depots are drawn as squares. In (b), we compute a depot-rooted spanning forest $F$ shown as green dashed edges followed by a minimum cardinality $\odd(F)$-join in dotted blue edges where the vertices of $\odd(F)$ are drawn in gray.}\label{fig:christ}
\end{figure}

\subsection{Analysis}

In this section, we prove that the approximation guarantee of our algorithm is $\tfrac{3}{2}$. For the entirety of the analysis, we fix a \Problem instance $(G, D)$ where $G = (V,E)$, an optimal solution $\OPT$ and the solution computed by the algorithm $\ALG$ with the corresponding depot-rooted spanning forest $F$ and minimum cardinality $\operatorname{odd}(F)$-join $J$.

We introduce the following notation for the analysis. For a graph $H$, we write $\comp(H)$ for the number of connected components in $H$. Using this notation, let
\[
k = \comp(G[V \setminus D])= |\mathcal{C}| \qquad\text{and}\qquad k^* = \comp((V \setminus D, \OPT[V \setminus D])).
\]

Note that in a depot-rooted spanning forest each component $C_i \in \mathcal{C}$ is connected. For $\OPT$ this is not necessarily the case -- the graphs $(C_i, \OPT[C_i])$ might have multiple connected components.
The total number of times the optimal solution creates additional connected components over all $C_i \in \mathcal{C}$ is given by $(k^*-k)$, that is,
\begin{equation}
\label{eq:definition_ksub}
k^* - k = \sum_{C_i \in \mathcal{C}} \left(\comp((C_i, \OPT[C_i])) - 1
\right).
\end{equation}

This quantity will be important throughout the analysis. We analyze the cost of the depot-rooted spanning forest $F$ and $\odd(F)$-join $J$ separately starting with the forest. First, we give an upper bound on the size of $F$ compared to \OPT.

\begin{lemma}\label{lem:tree_cost}
    It holds that $|F| \leq |\OPT\,| - k^*$.
\end{lemma}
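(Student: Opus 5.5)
The plan is to count edges on both sides. First compute $|F|$ exactly. A depot-rooted spanning forest has, for each component $C_i \in \C$, a spanning tree with $|C_i|-1$ edges plus one edge to a depot. Hence
\[
|F| = \sum_{C_i\in\C} |C_i| = |V\setminus D|.
\]
It therefore suffices to show $|\OPT\,| \geq |V\setminus D| + k^*$.

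For the lower bound on $\OPT$, I would split the multi-edge set $\OPT$ into two parts. The first part is $\OPT[V\setminus D]$, the edges with both endpoints outside $D$. The second part consists of the edges incident to at least one depot. Consider the graph $(V\setminus D, \OPT[V\setminus D])$. It has $|V\setminus D|$ vertices and, by definition, $k^*$ connected components. Since any graph with $n$ vertices and $c$ components has at least $n-c$ edges, we get $|\OPT[V\setminus D]| \geq |V\setminus D| - k^*$.

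It then remains to show that $\OPT$ contains at least $2k^*$ edges incident to depots. This is the key step. Let $K$ be a connected component of $(V\setminus D, \OPT[V\setminus D])$. By feasibility, every vertex of $K$ is connected to a depot in $\OPT$, so $\OPT$ contains an edge with one endpoint in $K$ and the other in $D$. To get two such edges, I would use that $\OPT$ is Eulerian, so every cut has even degree in $\OPT$.

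Take the cut $\delta(K)$. Any $\OPT$-edge leaving $K$ cannot go to a non-depot vertex outside $K$, since that would contradict maximality of the component $K$. So all $\OPT$-edges in $\delta(K)$ go to depots, and their number is even and positive, hence at least $2$.

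These edge sets are disjoint for different components $K$, because each such edge has exactly one non-depot endpoint, lying in a unique $K$. Summing over the $k^*$ components gives at least $2k^*$ edges between $V\setminus D$ and $D$. Edges between two depots only add to the count. Therefore
\[
|\OPT\,| \geq |V\setminus D| - k^* + 2k^* = |F| + k^*,
\]
which is the claim.

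The only delicate point is the parity argument for the cut $\delta(K)$, including the multi-edge setting. It works because the Eulerian property means all degrees are even, so $|\delta_{\OPT}(K)| = \sum_{v\in K}\deg_{\OPT}(v) - 2|\OPT[K]|$ is even, with multi-edges counted with multiplicity.
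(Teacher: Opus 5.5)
Your proposal is correct and follows the paper's proof: you show $|F| = |V\setminus D|$, bound the non-depot edges of $\OPT$ by $|V\setminus D| - k^*$, and use at least two depot edges per component of $(V\setminus D, \OPT[V\setminus D])$. You additionally spell out why each component needs two depot edges, via the even parity of $\delta_{\OPT}(K)$ and maximality of $K$, which the paper only asserts.
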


\begin{proof}
We have $|F| = |V \setminus D|$. Note that $(V \setminus D, \OPT[V \setminus D])$ has $k^*$ components and thus $\OPT$ contains at least $|V \setminus D| - k^*$ edges connecting non-depot vertices. Additionally, for each component of  $(V \setminus D, \OPT[V \setminus D])$ there must be at least two incident depot edges and thus $|\OPT| \geq (|V \setminus D| - k^*) + 2k^* = |F| + k^*.$
\end{proof}

We proceed by proving an upper bound on the cost of the parity correction step. Note that we cannot use the simple Christofides-Serdyukov bound of $\tfrac{1}{2}|\OPT|$ since odd-degree vertices of $F$ may be part of different components of $\OPT$. Instead, we use the notion of \emph{auxiliary edge sets} which was introduced by Xu and Rodrigues \cite{2010_1_5_approx_mdmtsp} in their work on the metric version of \ProblemNonGraphic.

\begin{definition}[auxiliary edge set]
    We call an edge set $A \subseteq E$ an auxiliary edge set (with respect to $F$ and $\OPT$) if each connected component of $(V, OPT \cup A)$ contains an even number of odd-degree vertices of $F$.
\end{definition}

When provided with an auxiliary edge set $A$, the following lemma from Xu and Rodrigues \cite{2010_1_5_approx_mdmtsp} allows us to upper bound the cost of the parity correction step. More precisely, when considering $\OPT \disjcup 2A$, the usual $\tfrac{1}{2}$-bound from the analysis of Christofides-Serdyukov holds.

\begin{lemma}[see \cite{2010_1_5_approx_mdmtsp}]
\label{lem:parity_correction_cost}
    Let $A$ be an auxiliary edge set with respect to $\OPT$ and $F$. We can upper bound the size of $J$ as
    $$
        |J| \leq \frac{1}{2} (|\OPT\,| + 2|A|).
    $$
\end{lemma}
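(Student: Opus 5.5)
The plan is to run the Christofides--Serdyukov argument on the multi-edge set $H := \OPT \disjcup 2A$, that is, $\OPT$ plus two parallel copies of every edge of $A$. Let $T := \odd(F)$. Since $J$ is a minimum-cardinality $T$-join in $G$, it suffices to exhibit two edge-disjoint $T$-joins $J_1, J_2$ inside $H$; their multiset union uses $|H| = |\OPT\,| + 2|A|$ edges, so the cheaper one has at most $\frac{1}{2}(|\OPT\,|+2|A|)$ edges.

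First, $H$ has all degrees even. $\OPT$ is Eulerian, hence every vertex has even degree in it, and doubling $A$ adds an even amount to every degree. Second, the connected components of $H$ are exactly those of $(V, \OPT \cup A)$, since doubling edges does not change connectivity. By the definition of an auxiliary edge set, each such component therefore contains an even number of vertices of $T$. Consequently, every component $K$ of $H$ with $|T \cap K| > 0$ has a closed Eulerian walk traversing each of its edges (with multiplicity) exactly once.

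Now fix such a component $K$ and the cyclic order in which its Eulerian walk first visits the vertices of $T \cap K$, say $t_1, \dots, t_{2m}$. Cut the walk into the $2m$ consecutive segments between $t_j$ and $t_{j+1}$, indices taken cyclically. Put the odd-indexed segments into $J_1$ and the even-indexed ones into $J_2$. Each segment is a walk whose endpoints are two distinct vertices of $T$. The symmetric difference of the odd-indexed segments therefore has odd degree exactly at $t_1, \dots, t_{2m}$: every $t_j$ is an endpoint of exactly one odd-indexed segment, and all other vertices have even degree. Reduce parallel pairs modulo $2$, which only decreases the cost. The same holds for the even-indexed segments. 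Taking the union over all components gives two edge-disjoint (as multisets) $T$-joins of total size at most $|H|$.

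Throughout I take "odd-degree vertices of $F$" to include depots of odd degree in $F$; the auxiliary-set definition already refers to all of $\odd(F)$. The step needing the most care is the multiset bookkeeping. Since $A$ may contain edges already in $\OPT$, and these are doubled, the segments may reuse parallel copies, and a $T$-join must be a set of edges of $G$. Mod-2 reduction handles this, and it is precisely why $J$ is compared via cardinality rather than being taken literally from $H$. The argument also relies on $\sum$ parity being even per component, which is exactly where the auxiliary property replaces the global evenness used in the standard analysis.
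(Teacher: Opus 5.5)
Your proof is correct and is essentially the argument the paper relies on. The paper gives no proof of its own: it cites Xu and Rodrigues and remarks that the usual Christofides--Serdyukov $\tfrac{1}{2}$-bound holds for $\OPT \disjcup 2A$. You carry out exactly that argument component by component, using the auxiliary property for per-component evenness and reducing the alternating Euler-walk segments modulo $2$ to obtain $\odd(F)$-joins in $G$.
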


Given an auxiliary edge set, we now have the necessary tools to bound the cost of the algorithm. It remains to show that a sufficiently small auxiliary edge set $A$ always exists. Note that since we only use the notion of auxiliary edge sets to upper bound the cost of an efficiently computable minimum cardinality $\operatorname{odd}(F)$-join, it suffices to show the existence of $A$, i.e., we do not need to show that $A$ can be computed efficiently. This is important since our construction of $A$ will make use of the component structure of $\OPT$. We use the following sufficient condition for auxiliary edge sets, also due to Xu and Rodrigues \cite{2017_xu_rodrigues}.

\begin{lemma}[see \cite{2017_xu_rodrigues}]
\label{lem:sufficient_condition_auxiliary_edge_set}
    Let $A \subseteq E$. If for every $\{u,v\} \in F$, the vertices $u$ and $v$ belong to the same connected component in $(V,\OPT \, \cup A)$, then $A$ is an auxiliary edge set with respect to $\OPT$ and $F$.
\end{lemma}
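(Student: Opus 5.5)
Fix an arbitrary connected component $K$ of $(V,\OPT\cup A)$. We must show that $K$ contains an even number of vertices of $\odd(F)$, that is, that $|K\cap \odd(F)|$ is even. The plan is a handshake-lemma argument applied to the subgraph of $F$ induced by $K$.

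The first step is to show that no edge of $F$ crosses $K$. By hypothesis, every edge $\{u,v\}\in F$ has $u$ and $v$ in the same component of $(V,\OPT\cup A)$. So either both endpoints lie in $K$ or neither does, and $\delta_F(K)=\emptyset$.

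The second step uses this to compare degrees. For every $v\in K$ we have $\deg_F(v)=\deg_{F[K]}(v)$, where $F[K]$ is the set of edges of $F$ with both endpoints in $K$. By the handshake lemma, $\sum_{v\in K}\deg_{F[K]}(v)=2|F[K]|$ is even, so $F[K]$ has an even number of odd-degree vertices. These are exactly the vertices of $\odd(F)\cap K$, so $|\odd(F)\cap K|$ is even.

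Since $K$ was arbitrary, every component of $(V,\OPT\cup A)$ contains an even number of odd-degree vertices of $F$, which is the definition of an auxiliary edge set.

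There is no real obstacle in this argument. The only step that needs care is the first one: checking that "same component for every $F$-edge" really gives $\delta_F(K)=\emptyset$. The rest is the standard parity count, and nothing specific to $\OPT$ is used beyond the fact that it defines the component partition.
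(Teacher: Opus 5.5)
Your proof is correct and complete. Because every edge of $F$ lies inside a single component $K$ of $(V,\OPT\cup A)$, each $v\in K$ has $\deg_F(v)=\deg_{F[K]}(v)$, and the handshake lemma then forces $|\odd(F)\cap K|$ to be even. The paper gives no proof of its own and simply cites Xu and Rodrigues, so there is no proof to compare against; your parity argument is the standard justification for this lemma.
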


We now show the existence of a small auxiliary edge set.

\begin{lemma}\label{lem:constructing_auxiliary_edge_set}
    There exists an auxiliary edge set $A$ with respect to $F$ and $\OPT$ with $|A| \leq k^*$.
\end{lemma}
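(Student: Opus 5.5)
We construct $A$ explicitly and verify the condition of \cref{lem:sufficient_condition_auxiliary_edge_set}, so that we only need to check that the endpoints of every edge of $F$ lie in the same component of $(V,\OPT \cup A)$.

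\textbf{Edges inside a component.} Fix a component $C_i \in \C$, and let $c_i = \comp((C_i,\OPT[C_i]))$ be the number of components of $\OPT$ restricted to $C_i$. Since $G[C_i]$ is connected, we can choose $c_i - 1$ edges of $G[C_i]$ that connect these $c_i$ pieces into one, for instance by taking a spanning tree of the graph obtained from $G[C_i]$ by contracting each piece. Let $A_i$ be this set of edges, and put $A' = \bigcup_i A_i$. By \eqref{eq:definition_ksub}, we get $|A'| = \sum_i (c_i - 1) = k^* - k$. After adding $A'$, every $C_i$ lies within a single component of $(V,\OPT \cup A')$. Hence every tree edge of $F$ inside $C_i$ already satisfies the condition.

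\textbf{Depot edges.} It remains to handle the $k$ edges of $F$ that join a component $C_i$ to its chosen depot $d_i$. For each such edge $\{u,d_i\}$ with $u \in C_i$, we add the edge $\{u,d_i\}$ itself to $A$ whenever $u$ and $d_i$ are not already in the same component of $(V,\OPT\cup A')$. This adds at most $k$ edges, so
\[
|A| \leq (k^*-k) + k = k^*.
\]
Now every edge of $F$ has both endpoints in the same component of $(V,\OPT\cup A)$, and \cref{lem:sufficient_condition_auxiliary_edge_set} shows that $A$ is an auxiliary edge set.

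\textbf{Main obstacle.} The delicate step is the counting in the first part: showing that merging the $\OPT$-pieces inside $C_i$ costs only $c_i - 1$ edges of $G$. This holds because contracting the pieces of the connected graph $G[C_i]$ leaves a connected multigraph on $c_i$ vertices, and any spanning tree of it has exactly $c_i - 1$ edges. Every edge of that spanning tree is an edge of $G$ between two distinct pieces, so it is a legitimate edge of $E$. Note also that isolated vertices of $C_i$ under $\OPT$ are themselves counted as pieces, so the bound still applies to them.

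Simpler variants are possible, but this construction already achieves the required bound $|A| \leq k^*$.
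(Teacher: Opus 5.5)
Your proposal is correct and takes essentially the same approach as the paper. You connect the $\OPT$-pieces inside each $C_i$ with $\comp((C_i,\OPT[C_i]))-1$ edges, for $k^*-k$ in total by \eqref{eq:definition_ksub}, add at most the $k$ depot-attachment edges of $F$, and conclude via \cref{lem:sufficient_condition_auxiliary_edge_set}; adding a depot edge only when needed and spelling out the contraction argument are harmless refinements of the paper's construction.
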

\begin{proof}
    To construct $A$, we add two different types of edges:
    \begin{enumerate}
        \item For each component $C_i \in \mathcal{C}$, we add the same edge to $A$ that the algorithm used to connect $C_i$ to a depot in $F$.
        \item For each component $C_i \in \mathcal{C}$, consider the set of components in $(C_i, \OPT[C_i])$. We add a set of edges $A_{C_i}$ to $A$ that ensures that $(C_i, \OPT[C_i] \cup A_{C_i})$ is connected. 
    \end{enumerate}
    There are $k$ edges of the first type (one for each component in $\mathcal{C}$). 
    For the second type, we have  $\operatorname{comp}((C_i, \OPT[C_i])) - 1$ many edges for each component $C_i$. Note that the type 2 edges exist since $C_i$ is connected in $G[V\setminus D]$. Summing up over all components and using \eqref{eq:definition_ksub}, we obtain that there are $k^*-k$ type 2 edges in $A$. This proves $|A| \leq k + (k^* - k) = k^*$. 
    
    It remains to show that $A$ is indeed an auxiliary edge set. To do so, we show the sufficient condition from \cref{lem:sufficient_condition_auxiliary_edge_set}.
    Let $\{u,v\} \in F$ be arbitrary. We distinguish the following two cases: Either $u$ and $v$ are both non-depot vertices or exactly one of the two vertices is a depot. This is sufficient since $F$ contains no edges between two depot vertices. In the first case, $u$ and $v$ are part of the same connected component $C_i \in \mathcal{C}$. By definition of the type 2 edges, $(C_i, \OPT[C_i] \cup A_{C_i})$ is connected and $u$ and $v$ are part of the same component in $(V, \OPT \cup A)$.
    If either $u$ or $v$ is a depot, then the edge $\{u,v\}$ is a type 1 edge and thus also in $A$. This concludes the proof.
\end{proof}

\begin{proof}[Proof of \cref{thm:gmdm-tsp-general}]
    First, note that the algorithm computes a feasible solution to the given \Problem instance: By construction, every non-depot vertex is connected to at least one depot in the depot-rooted spanning forest and the parity correction ensures that the output solution is indeed an Eulerian multi-edge set. The cost of the algorithm consists of the cost of the forest and the cost of the parity correction, that is, $|\ALG| = |F| + |J|$. We use Lemmas \ref{lem:tree_cost}, \ref{lem:parity_correction_cost} and \ref{lem:constructing_auxiliary_edge_set} to bound the costs of $F$ and $J$, respectively, and obtain
    \begin{align*}
        |\ALG| = |F| + |J| &\leq (|\OPT| - k^*) + \frac{1}{2}(|\OPT| + 2|A|) \\ &\leq (|\OPT| - k^*) + \frac{1}{2}(|\OPT| + 2k^*) = \frac{3}{2}|\OPT|. \qedhere
    \end{align*}
\end{proof}

Note that the guarantee of $\frac{3}{2}$ is tight. Observe that instances with $|D| = 1$ are \GraphicTSP instances and our algorithm is identical to Christofides-Serdyukov on such instances. Hence, since Christofides-Serdyukov is tight for \GraphicTSP, our algorithm is too.
\section{\texorpdfstring{$4/3$-Approximation on Subcubic Graphs}{4/3-Approximation on Subcubic Graphs}}
\label{sec:gmdmtsp_subcubic}

A natural idea for improving the approximation factor for \Problem from \cref{sec:gmdmtsp_general} is to apply techniques already known from \GraphicTSP. However, \Problem poses the additional challenge of first assigning the non-depot vertices to depots before then solving the \GraphicTSP instances induced by the assignment. To overcome this, we combine two different techniques: We use the notion of \emph{removable pairings} as introduced by Mömke and Svensson \cite{mömke_svensson} to compute a good \emph{\evencover} (see \Cref{def:even_cover}) that can be cheaply completed to a feasible solution for \Problem. 

The main result of this section is a $\tfrac{4}{3}$-approximation for \Problem when restricted to subcubic graphs, i.e., graphs of maximum degree 3. This matches the approximation guarantee for \GraphicTSP on the same graph class. 
However, we will see that we need to adapt the approach used for \GraphicTSP significantly. The main difference is that in contrast to \GraphicTSP, we cannot restrict ourselves to 2-vertex-connected instances. More precisely, given a cut-vertex $v$, we do not know in advance which component of $G-v$ should cover $v$.

\gmdmtspsubcubic*

For the rest of this section we fix a \Problem instance $(G, D)$ on a subcubic graph $G = (V, E)$ and denote by $\nn{E}$ the set of edges between two non-depot vertices, that is, $\nn{E} \coloneqq E(G[V \setminus D])$. Similarly, we write $\nd{E}$ for the set of edges connecting a non-depot and a depot vertex and write $\dd{E}$ for the set of edges between two depots. More generally, for an edge set $A \subseteq E$, we write $\nn{A} \coloneqq \nn{E} \cap A$, $\nd{A} \coloneqq \nd{E} \cap A$ and $\dd{A} \coloneqq \dd{E} \cap A$. Analogously, we write $\nn{V} \coloneqq V \setminus D$ and $\nn{G} \coloneqq (\nn{V}, \nn{E})$.

\subsection{Even Covers}

We denote a set of disjoint cycles as \evencover, following the definition in \cite{WigalYooYu2021cubic}. This will be an important object in our algorithm.

\begin{definition}[\evencover]\label{def:even_cover}
    We call an edge set $S \subseteq E$ (not allowing multi-edges) an \evencover if $\deg_S(v) \in \{ 0,2\}$ for all $v \in V$.
\end{definition}

Given an \evencover $S$ of $G$, we can complete it to a feasible solution by using double edges to ensure that every component in $(V, S)$ is connected to a depot.

\begin{lemma}\label{lem:completion_of_s}
    Given an \evencover $S \subseteq E$, we can efficiently compute a feasible solution $\completion{S}$ to \Problem of cost
    $$
        |\completion{S}| = |\nn{S}| + |\dd{S}| + 2 \comp((\nn{V}, \nn{S})).
    $$
\end{lemma}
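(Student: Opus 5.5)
We take $\completion{S}$ to be $S$ itself together with doubled edges that attach every component of $(\nn{V},\nn{S})$ not yet attached to a depot. The first step is to understand the structure of $(\nn{V},\nn{S})$ and relate it to the $\nd{E}$-edges of $S$. Since $\deg_S(v)\in\{0,2\}$, every component $K$ of $(\nn{V},\nn{S})$ has one of three forms.
\begin{enumerate}
\item It is an isolated vertex with no $S$-edges. A vertex with $\deg_S(v)=2$ whose two $S$-edges both go to depots is not of this type; we treat it as a degenerate path below.
\item It is a cycle of $S$ that avoids the depots.
\item It is a (possibly one-vertex) path obtained from a cycle of $S$ through a depot by deleting its depot-incident edges.
\end{enumerate}
In the third case, each endpoint of the path is incident to an edge of $\nd{S}$. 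Every edge of $\nd{S}$ arises exactly once in this way, and each path component accounts for exactly two of them. Hence $|\nd{S}| = 2p$, where $p$ is the number of path components. Moreover, every path component is already connected to a depot within $S$.

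Next we attach the remaining $c$ components, namely the isolated vertices and depot-free cycles, where $p+c=\comp((\nn{V},\nn{S}))$. Contract every path component and every depot into a single root vertex $r$, and contract each of the remaining $c$ components into a single vertex. Since $G$ is connected, the contracted graph is connected, so it has a spanning tree $T$ with exactly $c$ edges. Each edge of $T$ corresponds to an edge of $G$ in $\nn{E}\cup\nd{E}$. We set $\completion{S} := S \disjcup 2T$, meaning $S$ plus two copies of each edge of $T$, viewed as a multiset of edges of $G$.

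It remains to check feasibility and cost. All degrees in $S$ are even and doubling edges preserves parity, so $\completion{S}$ is Eulerian. Every non-depot vertex lies in some component of $(\nn{V},\nn{S})$. That component is either a path, which is connected to a depot via its $\nd{S}$-edges, or it is joined to $r$, and hence to a depot or a depot-connected path, along the doubled edges of $T$. So $\completion{S}$ is feasible. Its cost is
\[
|\completion{S}| = |\nn{S}|+|\dd{S}|+|\nd{S}|+2c = |\nn{S}|+|\dd{S}|+2p+2c,
\]
which is the claimed formula. The tree $T$ can be found by BFS in the contracted graph, so the construction is efficient.

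The main obstacle is the parity issue caused by cycles of $S$ that pass through depots. Naively discarding the $\nd{S}$ edges would leave odd-degree path endpoints. The fix is the counting identity $|\nd{S}|=2p$, which shows that keeping these edges costs exactly the charge of $2$ per path component that appears in the formula.
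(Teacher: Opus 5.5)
Your proof is correct and is essentially the paper's argument: contract all depots together with the depot-attached (path) components into one root vertex, take a spanning tree on the root and the remaining $c$ depot-free components, add that tree doubled, and use $|\nd{S}| = 2p$ to get the cost formula. You derive the fact that each component of $(\nn{V},\nn{S})$ has zero or two incident depot edges in $S$ from the even-cover condition $\deg_S(v)\in\{0,2\}$, which is the right justification; the paper instead attributes this to subcubicity, which is not what the argument actually needs.
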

\begin{proof}
    We start by counting the number of depot-free components of $(V, S)$. In $S$, every component of $(\nn{V},\nn{S})$ has zero or two incident edges to depots since $G$ is subcubic. Therefore, the number of depot-free components in $S$ is $\comp((\nn{V}, \nn{S})) - |\nd{S}| / 2$.
    Next, consider the graph obtained by contracting all components of $S$ in $G$. Moreover, contract all components which contain a depot to a single vertex. The resulting graph is connected since $G$ is connected and has a vertex for each depot-free component of $(\nn{V}, \nn{S})$ and one additional vertex for the depots. A spanning tree in the contracted graph corresponds to $ \comp((\nn{V}, \nn{S})) - |\nd{S}| / 2$ edges $T_S$ in the original graph $G$. We choose $\completion{S} = S \disjcup 2T_S$ to obtain the desired equality.
\end{proof}

We continue by deriving an upper bound on the cost of $\completion{S}$. To this end, we define a suitable cost function and then compute an even cover minimizing that cost function. The following lemma shows that the latter is efficiently possible for all cost functions using a simple $T$-join computation.

\begin{lemma}\label{lem:minimize_cost_function}
    For every cost function $c: E \to \R$, we can efficiently compute an \evencover $S$ which minimizes $c(S)$.
\end{lemma}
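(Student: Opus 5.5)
We plan to reduce the problem to a minimum-cost $T$-join computation with possibly negative weights, using subcubicity to identify even covers with Eulerian edge sets. First we observe that in a subcubic graph, a (simple) edge set $S \subseteq E$ is an \evencover if and only if every vertex has even degree in $S$. One direction is trivial. For the other, $\deg_S(v) \le \deg_G(v) \le 3$, and an even value in $\{0,1,2,3\}$ lies in $\{0,2\}$. Hence minimizing $c(S)$ over \evencovers is the same as minimizing $c$ over all $\emptyset$-joins of $G$, i.e., over edge sets with $\odd(S) = \emptyset$.

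Next we handle arbitrary real costs with the standard negative-edge transformation. Let $N = \{e \in E : c(e) < 0\}$ and define $|c|(e) = |c(e)| \ge 0$. For any edge set $S$, put $J = S \,\triangle\, N$. Then $\odd(J) = \odd(S) \,\triangle\, \odd(N)$, because parity of degrees is additive modulo $2$ under symmetric difference. So $S$ is an $\emptyset$-join if and only if $J$ is an $\odd(N)$-join. Moreover,
\[
c(S) = c(S \setminus N) + c(S \cap N) = |c|(J \setminus N) - |c|(N \cap S) = |c|(J) + c(N),
\]
where the last step uses $J = (S \setminus N) \cup (N \setminus S)$ and $c(N) = -|c|(N)$. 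Since $c(N)$ is a constant, minimizing $c(S)$ over $\emptyset$-joins $S$ is equivalent to minimizing $|c|(J)$ over $\odd(N)$-joins $J$.

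The latter is a minimum-cost $T$-join problem with nonnegative weights and $T = \odd(N)$, which has even cardinality. It is solvable in polynomial time by the classical method: compute shortest-path distances between vertices of $T$ and find a minimum-weight perfect matching on $T$. Since weights are nonnegative, we may take $J$ to be a simple edge set, namely the symmetric difference of the matched shortest paths, which can only lower the cost. We output $S = J \,\triangle\, N$, which is a simple edge set with all degrees even, hence an \evencover by the first step. Feasibility of the problem is clear because $\emptyset$ is an \evencover and $N$ itself is a $\odd(N)$-join.

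The only step needing care is ensuring that the $T$-join is returned as a genuine subset of $E$ without multi-edges, since \Cref{def:even_cover} forbids multi-edges. We handle this by noting that some minimum $|c|$-cost $T$-join is a simple set, obtained by taking symmetric differences of the paths. Everything else is routine parity bookkeeping.
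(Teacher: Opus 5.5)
Your proof is correct and follows the paper's argument: let $N$ be the negative-cost edges, compute a minimum $|c|$-cost $\odd(N)$-join $J$, output $N \triangle J$, and use $c(N \triangle J') = c(N) + |c|(J')$ to show optimality over all even covers. You also state explicitly the subcubicity step that the paper uses implicitly when it calls $N \triangle J$ an \evencover: in a subcubic graph, even degree forces degree in $\{0,2\}$.
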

\begin{proof}
    Consider the set of negative cost edges $N = \{ e \in E \mid c(e) < 0\}$. Now, define $c'(e) = |c(e)|$ and compute a $\min$ cost $\odd(N)$-join $J$ in $G$ with respect to $c'$. Clearly, $S_c = N \triangle J$ is an even cover, and we claim that it minimizes cost w.r.t.\ the cost function $c$.
    To see this, observe that for any \evencover $S$, the set $N \triangle S$ is an $\odd(N)$-join and thus, $S$ can be written as $S = N \triangle (N \triangle S)$. Lastly, for any $\odd(N)$-join $J'$, we have $c(N \triangle J') = c(N) + c'(J')$
    and since we compute a $\min$-cost $\odd(N)$-join, we conclude that $c(S_c) \leq c(S)$.
\end{proof}

To define an appropriate cost function with which we bound $|\completion{S}|$, we recall the notion of removable pairings as introduced by Mömke and Svensson \cite{mömke_svensson}.

\begin{definition}[removable pairing]
    A removable pairing in a graph $(V, E)$ is a pair $(R, \P)$ with $R \subseteq E$ and $\P \subseteq { R \choose 2 }$ that fulfills the following properties:
        \begin{enumerate}[(i)]
            \item for each $P = \{e_1, e_2 \} \in \P$, there exists a vertex $v \in V$ with $\deg(v) \geq 3$ such that both $e_1$ and $e_2$ are incident to $v$; \label{enu:removable_pairings_2}
            \item the elements of $\P$ are pairwise disjoint; \label{enu:removable_pairings_3}
            \item for any set $F \subseteq R$ with $|F \cap P| \leq 1$ for all $P \in \P$, the graph $(V ,E \setminus F)$ is connected. \label{enu:removable_pairings_4}
        \end{enumerate}
        The elements of $R$ are called removable edges and the elements of $\P$ pairs. We say that an edge $e$ is paired if there exists some $P \in \P$ with $e \in P$.
\end{definition}

Hence, a removable pairing provides us with sets of edges $F$ which we can remove from $E$ without losing connectivity.
To construct a removable pairing, we use DFS trees as initially suggested by Mömke and Svensson for subcubic graphs. A visualization of the construction of the removable pairing is given in \cref{fig:subcubic_algo}a.
For each component $C \in \C$, we choose a root $r_C$ such that $r_C$ is adjacent to a depot. Next, we compute a DFS tree $T_C$ in $C$ starting at $r_C$ and mark each back-edge $e = \{v, u\}$ as removable. A back edge describes an edge that connects two vertices in the tree but is not itself a tree edge, i.e.,  all edges in $E(G[C]) \setminus T_C$. Moreover, we pair $e$ with the unique first tree edge $\{u,w\}$ on the $u$-$v$-path in $T_C$, if $u \neq r_C$. For each such pairing, we refer to $u$ as the center of the pairing. We denote the set of all centers by $\centers$. 
Since $G$ is subcubic, we pair each tree edge at most once and obtain a pair for each back-edge, with the exception of the possible back edge to $r_C$. Note that there is at most one such edge as we choose $r_C$ adjacent to a depot. We denote by $\ell \leq k$ the number of such ``bad'' roots, and conclude that for the resulting removable pairing $(R, \P$) it holds that
\begin{equation}\label{equ:num_removable_reformulated}
     |\P| = |\centers| = |\nn{E}| - |\nn{V}| + k - \ell \quad \text{ and } \quad |R| = 2(|\nn{E}| - |\nn{V}| + k) - \ell.
\end{equation}
In particular, we have $|R| - |\centers| = |\nn{E}| - |\nn{V}| + k$.

From now on, we fix the constructed removable pairing $(R, \P)$. Recall that $W$ denotes the set of centers of $\P$. As we will see in the subsequent \cref{lem:bound_on_completion_with_phi} and its proof, the following cost function $w_R: E \rightarrow \R$ will be helpful to upper bound the size of the completed solution $\completion{S}$ of an \evencover $S$. 
\begin{equation}
\label{def:w_e}
    w_R(e) =
\begin{cases}
  1 - |\centers \cap e|,  & e \in R, \\
  -1 - |\centers \cap e|, & e \in \nn{E} \setminus R, \\
  0,      & e \in \nd{E}, \\
  1           & e \in \dd{E}.
\end{cases}
\end{equation}

\begin{lemma}\label{lem:bound_on_completion_with_phi}
    Given an \evencover $S \subseteq E$, it holds that
    $
        |\completion{S}| \leq 2 |\nn{V}| + w_R(S).
    $
\end{lemma}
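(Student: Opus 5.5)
Since $R \subseteq \nn{E}$ and every $e \in \nn{E}$ has both endpoints in $\nn{V}$, we can expand the cost function as
\[
w_R(S) = |S\cap R| - |\nn{S}\setminus R| - \sum_{e\in \nn{S}} |\centers\cap e| + |\dd{S}|,
\]
and we can write $|\nn{S}| = |S\cap R| + |\nn{S}\setminus R|$. Plugging in the value of $|\completion{S}|$ from \cref{lem:completion_of_s}, the terms $|S \cap R|$ and $|\dd{S}|$ cancel. The claim is therefore equivalent to
\[
2|\nn{S}\setminus R| + \sum_{w\in \centers}\deg_{\nn{S}}(w) + 2\comp((\nn{V},\nn{S})) \le 2|\nn{V}|.
\]
Writing $2|\nn{S}\setminus R| = \sum_{v} \deg_{\nn{S}\setminus R}(v)$, the left-hand side becomes a sum of vertex charges $\phi(v) = \deg_{\nn{S}\setminus R}(v) + [v\in\centers]\deg_{\nn{S}}(v)$ plus $2$ per component. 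It therefore suffices to show, for every component $K$ of $(\nn{V},\nn{S})$, that $\sum_{v\in K}\phi(v) \le 2|K| - 2$.

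Each component $K$ is an isolated vertex, a path, or a cycle, since $S$ is an even cover. An isolated vertex has charge $0 = 2\cdot 1 - 2$. For the other cases I would first bound the charge at a single vertex. If $v\notin\centers$, then $\phi(v)\le \deg_{\nn{S}}(v)\le 2$. If $v\in \centers$, let $\{e_1,e_2\}\in\P$ be its pair. Both $e_1,e_2$ are in $R$, and since $v$ has degree at most $3$, at most one $S$-edge at $v$ lies outside $R$. Hence $\phi(v) \le 3$, and more precisely $\phi(v) = \deg_{\nn{S}}(v) + \deg_{\nn{S}\setminus R}(v)$. Centers with $\phi(v)=3$ must be compensated elsewhere in $K$, which together with the global deficit of $2$ is the real content of the lemma.

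For a path component, both endpoints have $\nn{S}$-degree $1$, because they continue to a depot via $\nd{S}$. This already gives $2$ units of slack at the ends. The remaining task is to show that every center with charge $3$ on the path is compensated by an interior vertex of charge at most $1$. The DFS structure should provide this: a center $u$ with charge $3$ uses its parent tree edge (unpaired, so not in $R$) together with one paired edge. Following the path from $u$ downward in $T_C$, it must eventually leave the subtree via a back edge in $R$, and the vertex where this happens has $\deg_{\nn{S}\setminus R}\le 1$ and is not a center using two non-removable edges. For a cycle component, the cycle cannot consist of tree edges only, so it contains a back edge. Consider the vertex of the cycle closest to the root $r_C$. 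Both of its cycle edges go downward, i.e., to children or as back edges, and I would argue that its charge is small and that each back edge traversed supplies the needed compensation, yielding the $-2$.

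The main obstacle is exactly this per-component charging for cycles and paths: producing a clean injective assignment from centers of charge $3$ to low-charge vertices, using the DFS properties (back edges go to ancestors, and a pair consists of a back edge plus the first tree edge towards its lower endpoint). I would first try to prove the simpler statement $\sum_{v\in K}\phi(v)\le 2|\nn{S}[K]\cap R| + \ldots$ by charging each back edge of $S$ to the center it is paired with. If that is too weak, I would fall back on the removability property~\ref{enu:removable_pairings_4}, applied to $F = \nn{S}\cap R$ restricted to one edge per pair, to extract the missing slack per component.
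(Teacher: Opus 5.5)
\textbf{There is a genuine gap: the per-component inequality, which carries all the content of the lemma, is never proved, and your sketched charging fails as stated.}

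Your reduction is correct. Expanding $w_R$ and using \cref{lem:completion_of_s}, the lemma is equivalent to $2|\nn{S}\setminus R| + \sum_{w\in\centers}\deg_{\nn{S}}(w) + 2\comp((\nn{V},\nn{S})) \le 2|\nn{V}|$. The per-component inequality $\sum_{v\in K}\phi(v)\le 2|K|-2$ is also true. For paths, you claim each charge-$3$ center is compensated by an \emph{interior} vertex of charge at most $1$; this is false. Take $r_C=p$ adjacent to a depot $d_1$, DFS path $p,u,w,v$ with back edge $\{u,v\}$, where $v$ is adjacent to a depot $d_2$ and $\{d_1,d_2\}\in E$. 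Then $u$ is a center and $R=\{\{u,v\},\{u,w\}\}$. For the \evencover given by the cycle $d_1\,p\,u\,v\,d_2\,d_1$, the component $p\,u\,v$ of $(\nn{V},\nn{S})$ has charges $1,3,0$. The inequality is tight, and the only interior vertex is $u$ itself, so the compensation must come from the endpoint $v$ of charge $0$. For cycles, the topmost vertex, when it is a center, has charge exactly $2$, so it is not the low-charge vertex you want either.

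\textbf{The missing idea is to count edges rather than vertices.} A center $u\neq r_C$ has exactly three incident edges (parent edge, paired child edge, paired back edge), all in $\nn{E}$. Hence every center lying in a non-trivial component $K$ has $\deg_{\nn{S}}=2$, and
\[
\sum_{v\in K}\phi(v)=2|E(K)\setminus R|+2|\centers\cap K|.
\]
Since $|E(K)|=|K|-1$ for a path and $|E(K)|=|K|$ for a cycle, you need $|\centers\cap K|\le|E(K)\cap R|$ for paths and $|\centers\cap K|+1\le|E(K)\cap R|$ for cycles.

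\emph{Paths.} The pairs are disjoint, and each center on $K$ uses two of its three edges, hence at least one of its paired edges. This gives an injection from $\centers\cap K$ into $E(K)\cap R$.

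\emph{Cycles.} Let $t$ be the vertex of $K$ of minimum depth in $T_C$. Both cycle edges at $t$ go downward. They cannot be tree edges to two distinct children, because those subtrees are joined only through $t$ and its ancestors. So one of them is a back edge from $t$. If $t\neq r_C$, then $t$ is a center whose two paired edges both lie on $K$. If $t=r_C$, then $K$ contains the unpaired back edge at $r_C$. Either way you gain the extra removable edge.

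\textbf{Comparison with the paper.} The paper avoids per-component analysis altogether. It adds to $\nn{S}$ the set $L(S)$ consisting of the non-removable edges not in $S$ plus one paired edge per uncovered center. Property~(iii) then gives $\comp((\nn{V},\nn{S}))\le k+|L(S)|$, and the arithmetic yields exactly your reduced inequality globally. This is the ``fallback'' you mention at the end.
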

\begin{proof}
    We want to apply \Cref{lem:completion_of_s} to upper bound the cardinality of $F(S)$. To this end, we upper bound the number of components of $(\nn{V}, \nn{S})$. For this, we construct an edge set $L(S)$ such that $(\nn{V}, \nn{S} \cup L(S))$ consists of at most $k$ components by making use of our removable pairing. We start with all non-removable edges not yet in $\nn{S}$, that is, $(\nn{E} \setminus R) \setminus \nn{S}$, and add one paired tree edge for each center $w$ that is uncovered, i.e. $\delta(w)\cap S = \emptyset$.
    This yields $|L(S)| = (|\nn{E}| - |R| - |\nn{S} \setminus R|) + \uncovered(S)$ where $\uncovered(S)$ denotes the number of centers that are not covered by $S$. In particular, $(\nn{V}, \nn{S} \cup L(S))$ contains all non-removable edges and at least one removable edge per pairing.

    By property (iii) from the definition of removable pairings, every component of $(\nn{V}, \nn{E})$ is connected in $(\nn{V}, \nn{S} \cup L(S))$,
    and thus, this graph consists of at most $k$ components. It follows that $\comp((\nn{V}, \nn{S})) \leq k + |L(S)|$.
    To further understand $|L(S)|$, we first observe that $2\uncovered(S) = 2 |\centers| - \sum_{v \in \centers} \deg_S(v)$ since every covered center has degree 2 in $S$ and every uncovered center has degree $0$.
    Combining this with (\ref{equ:num_removable_reformulated}), we obtain 
    \begin{align*}
        2|L(S)| &= 2|\nn{E}| - 2|R| - 2|\nn{S} \setminus R| + 2\uncovered(S) \\
        &= 2|\nn{E}| - 2(|R| - |\centers|) - 2|\nn{S} \setminus R| - \sum_{v \in \centers} \deg_S(v) \\
        &\overset{(\ref{equ:num_removable_reformulated})}{=} 2|\nn{E}| - 2(|\nn{E}| - |\nn{V}| + k) - 2|\nn{S} \setminus R| - \sum_{v \in \centers} \deg_S(v) \\
        &= 2|\nn{V}| - 2k - 2|\nn{S} \setminus R| - \sum_{v \in \centers} \deg_S(v).
    \end{align*}
    Inserting this into \Cref{lem:completion_of_s}, we get the desired bound as follows.
    \begin{align*}
        |\completion{S}| &\leq |\nn{S}| + |\dd{S}| + 2 (k + |L(S)|) \\
        &= |\nn{S}| + |\dd{S}| + 2k + 2|\nn{V}| - 2k - 2|\nn{S} \setminus R| - \sum_{v \in \centers} \deg_S(v) \\
        &=  2|\nn{V}| + |\nn{S} \cap R|  - |\nn{S} \setminus R| + |\dd{S}| - \sum_{v \in \centers} \deg_S(v) \\
        &= 2|\nn{V}| + w_R(S).
    \end{align*}

    Note that the second to last line motivates the definition of $w_R$ and the last equality follows immediately from that definition.
\end{proof}

The results so far lead us to the following formulation of our algorithm: Start by computing a removable pairing using DFS trees in the components of $\nn{G}$ as described above. Using these DFS trees, define $w_R$, efficiently compute a minimum cost \evencover $S$ using \Cref{lem:minimize_cost_function} and complete this to a solution $F(S)$ using \Cref{lem:completion_of_s}. Pseudocode of our algorithm can be seen in \Cref{alg:4_3_gmdm_tsp}, and an example run of the algorithm is shown in \Cref{fig:subcubic_algo}. It remains to show the existence of an \evencover $S$ such that $w_R(S)$ is sufficiently small.

\begin{figure}[thb]
    \centering
    \subcaptionbox{Computation of the removable pairing $(R, \P)$}[0.48\textwidth]{
        \begin{tikzpicture}[reset cm, scale = 1.05]
    \node[style=Depot] (A) at (0,0) {};
    \node[style=Normal Node] (B) at (0,1) {};
    \node[style=Normal Node] (C) at (1,1) {};
    \node[style=Normal Node] (D) at (2,1) {};
    \node[style=Depot] (E) at (3,1) {};
    \node[style=Normal Node] (F) at (4,1) {};
    \node[style=Center] (G) at (4.7,1) {};
    \node[style=Normal Node] (H) at (3,2) {};
    \node[style=Depot] (I) at (4,2) {};
    \node[style=Normal Node] (J) at (5.2,2) {};
    \node[style=Center] (K) at (1,0) {};
    \node[style=Normal Node] (L) at (1.5,-.7) {};
    \node[style=Normal Node] (M) at (2,0) {};
    \node[style=Depot] (N) at (3,0) {};
    \node[style=Depot] (O) at (4,0) {};
    \node[style=Normal Node] (P) at (5.2,0) {};

    \draw[style=Tree] (B) -- (C) -- (D) -- (H);
    \draw[style=Tree] (C) -- (K) -- (M) -- (L);
    \draw[style=Tree] (F) -- (G) -- (J);
    \path (J) edge[style=Tree, bend left = 15] (P);

    \draw[style=Normal Edge] (A) -- (B);
    \path (B) edge[style=Normal Edge, bend left = 10] (H);
    \draw[style=Normal Edge] (H) -- (I) -- (J);
    \draw[style=Normal Edge] (K) -- (L);
    \draw[style=Normal Edge] (D) -- (E) -- (F);
    \draw[style=Normal Edge] (M) -- (N) -- (O) -- (P) -- (G);

    \node[style=Label] (l1) at (0, 1.25) {$r_1$}; 
    \node[style=Label] (l2) at (4, 1.25) {$r_2$}; 

    \begin{pgfonlayer}{background}
      \parallelEdge[style=Removable1, bend left = 10]{B}{H}{2pt}
      \parallelEdge[style=Removable1]{K}{M}{-2pt}
      \parallelEdge[style=Removable1]{K}{L}{2pt}
      \parallelEdge[style=Removable1]{G}{J}{-2pt}
      \parallelEdge[style=Removable1]{G}{P}{2pt}
    \end{pgfonlayer}
    \path (1.5, -.45) edge[style=PairingArrow] (1.75, -.15);              
    \path (5, 1.3) edge[style=PairingArrow, bend left = 15] (5, .7);      
\end{tikzpicture}
    \label{subfig:removable_pairings}
    }
    \subcaptionbox{Cost function $w_R$ and min-cost $\text{odd}(N)$-join $J$}[.48\textwidth]{
        \begin{tikzpicture}[reset cm, scale = 1.05]
    \node[style=Depot] (A) at (0,0) {};
    \node[style=Gray] (B) at (0,1) {};
    \node[style=Gray] (C) at (1,1) {};
    \node[style=Normal Node] (D) at (2,1) {};
    \node[style=Depot] (E) at (3,1) {};
    \node[style=Gray] (F) at (4,1) {};
    \node[style=Gray] (G) at (4.7,1) {};
    \node[style=Gray] (H) at (3,2) {};
    \node[style=Depot] (I) at (4,2) {};
    \node[style=Gray] (J) at (5.2,2) {};
    \node[style=Gray] (K) at (1,0) {};
    \node[style=Gray] (L) at (1.5,-.7) {};
    \node[style=Gray] (M) at (2,0) {};
    \node[style=Depot] (N) at (3,0) {};
    \node[style=Depot] (O) at (4,0) {};
    \node[style=Gray] (P) at (5.2,0) {};

    \draw[style=Normal Edge] (A) --node[style=Label, right, inner sep = 0.05cm] {$0$} (B);
    \draw[style=Join] (B) --node[style=Label, above, inner sep = 0.05cm] {$-1$} (C);
    \draw[style=Normal Edge] (C) --node[style=Label, above, inner sep = 0.05cm] {$-1$} (D);
    \draw[style=Join] (D) --node[style=Label, above, inner sep = 0.05cm] {$0$} (E);
    \draw[style=Join] (E) --node[style=Label, above, inner sep = 0.05cm] {$0$} (F);
    \draw[style=Normal Edge] (F) --node[style=Label, above, inner sep = 0.05cm] {$-2$} (G);
    \path (B) edge[style=Normal Edge, bend left = 10] (H);
    \node[style=Label] (lbh) at (1.5, 1.85) {$1$};
    \draw[style=Join] (D) --node[style=Label, left, inner sep = 0.08cm] {$-1$} (H);
    \draw[style=Normal Edge] (H) --node[style=Label, above, inner sep = 0.05cm] {$0$} (I);
    \draw[style=Normal Edge] (I) --node[style=Label, above, inner sep = 0.05cm] {$0$} (J);
    \draw[style=Join] (G) --node[style=Label, left, inner sep = 0.05cm] {$0$} (J);
    \draw[style=Normal Edge] (G) --node[style=Label, left, inner sep = 0.05cm] {$0$} (P);
    \path (J) edge[style=Normal Edge, bend left = 15] (P);
    \node[style=Label] (ljp) at (5.17, 1) {$-1$};
    \draw[style=Join] (O) --node[style=Label, above, inner sep = 0.05cm] {$0$} (P);
    \draw[style=Join] (N) --node[style=Label, above, inner sep = 0.05cm] {$1$} (O);
    \draw[style=Join] (M) --node[style=Label, above, inner sep = 0.05cm] {$0$} (N);
    \draw[style=Normal Edge] (C) --node[style=Label, left, inner sep = 0.05cm] {$-2$} (K);
    \draw[style=Normal Edge] (K) --node[style=Label, above, inner sep = 0.05cm] {$0$} (M);
    \draw[style=Join] (K) --node[style=Label, left, inner sep = 0.05cm] {$0$} (L);
    \draw[style=Normal Edge] (L) --node[style=Label, right, inner sep = 0.05cm] {$-1$} (M);
\end{tikzpicture}
    }
    \vspace*{0.15cm}

    \subcaptionbox{Even cover $S^* = N \Delta J$ and completion $F(S^*)$}[0.48\textwidth]{
        \begin{tikzpicture}[reset cm, scale = 1.05]
    \node[style=Depot] (A) at (0,0) {};
    \node[style=Normal Node] (B) at (0,1) {};
    \node[style=Normal Node] (C) at (1,1) {};
    \node[style=Normal Node] (D) at (2,1) {};
    \node[style=Depot] (E) at (3,1) {};
    \node[style=Normal Node] (F) at (4,1) {};
    \node[style=Normal Node] (G) at (4.7,1) {};
    \node[style=Normal Node] (H) at (3,2) {};
    \node[style=Depot] (I) at (4,2) {};
    \node[style=Normal Node] (J) at (5.2,2) {};
    \node[style=Normal Node] (K) at (1,0) {};
    \node[style=Normal Node] (L) at (1.5,-.7) {};
    \node[style=Normal Node] (M) at (2,0) {};
    \node[style=Depot] (N) at (3,0) {};
    \node[style=Depot] (O) at (4,0) {};
    \node[style=Normal Node] (P) at (5.2,0) {};

    \draw[style=EvenCover] (C) -- (D) -- (E) -- (F) -- (G) -- (J);
    \path (J) edge[style=EvenCover, bend left = 15] (P);
    \draw[style=EvenCover] (P) -- (O) -- (N) -- (M) -- (L) -- (K) -- (C);

    \path (A) edge[style=DoubleEdges, bend left = 15] (B);
    \path (B) edge[style=DoubleEdges, bend left = 15] (A);
    \path (D) edge[style=DoubleEdges, bend left = 15] (H);
    \path (H) edge[style=DoubleEdges, bend left = 15] (D);

    \draw[style=UnusedEdge] (B) -- (C);
    \draw[style=UnusedEdge] (K) -- (M);
    \draw[style=UnusedEdge] (H) -- (I) -- (J);
    \draw[style=UnusedEdge] (G) -- (P);
    \path (B) edge[style=UnusedEdge, bend left = 10] (H);

\end{tikzpicture}
    }
    \caption{Depots are shown as squares and non-depots are shown as circles. In (a), we compute a removable pairing $(R, \P)$ using the DFS trees marked by green, dashed edges. Removable edges are shown dotted in red, paired edges are indicated by arrows, and centers are shown in red. In particular, the back edge to $r_1$ is not paired. In (b), we give the edge weights $w_R$. The vertex set $\text{odd}(N)$ is shown in gray, and a minimum cost $\text{odd}(N)$-join $J$ for cost $|w_R|$ is indicated by blue, dotted edges. Lastly, in (c), we give the even cover $S^* = N \Delta J$ (purple, dashed) and its completion to a feasible solution using the orange, dotted double edges.
    }
    \label{fig:subcubic_algo}
\end{figure}

\begin{algorithm}[htb]
    \caption{Removable Pairing based Approximation Algorithm for \Problem}\label{alg:4_3_gmdm_tsp}
    \textbf{Input} Graph $G = (V,E)$, depots $D \subseteq V$
    \begin{algorithmic}[1]

    \vspace*{0.15cm}
    \Statex \textbf{Compute removable pairing $(R, \P)$.}
    \For{$C \in \C$}
        \State Find a vertex $r_C \in C$ adjacent to a depot.
        \State Compute a DFS tree $T_C$ rooted in $r_C$.
        \State{Compute a removable pairing $(R_C, \P_C)$ in $G[C]$ by marking every back-edge $(v,u)$ as removable and pairing it with the first edge $(u,w)$ on the $u$-$v$-path in $T_C$ if $u \neq r_C$.}
    \EndFor
    \State Combine the $R_C$ and $P_C$ to a removable pairing $(R, \P)$.

    \vspace*{0.15cm}
    \Statex \textbf{Compute \evencover $S_0$.}
    \State Define $w_R$ as in \Cref{def:w_e}.
    \State Let $S_0$ be the \evencover minimizing $w_R(S_0)$ obtained with \Cref{lem:minimize_cost_function}.

    \vspace*{0.15cm}
    \Statex \textbf{Complete $S_0$ to a feasible solution.}
    \State Let $\completion{S_0}$ be the solution obtained by applying \Cref{lem:completion_of_s} to $S_0$.
    \vspace*{0.15cm}
    
    \State \Return $\completion{S_0}$
    \end{algorithmic}
\end{algorithm}

\subsection{Existence of Sufficiently Cheap Even Covers}

To show the existence of an \evencover $S$ such that $w_R(S)$ is sufficiently small, we construct a comparison graph $K$ with respect to a fixed optimal solution $\OPT$. We denote by $\double$ the number of double edges that $\OPT$ uses in $\nn{E}$. Using the comparison graph $K$, we obtain an upper bound for the cost of our algorithm in \Cref{lem:sufficiently_cheap_even_cover}. In the subsequent \Cref{lem:opt_lower_bound} we show that the upper bound proved in \Cref{lem:sufficiently_cheap_even_cover} is at most $\frac{4}{3}|\OPT|$.
Our proof of \Cref{lem:sufficiently_cheap_even_cover} uses the following well known result. For the sake of completeness, we prove this result in Appendix \ref{sec:appendix_sec_3}. 

\begin{restatable}[see e.g.~proof of Theorem 12.3 in \cite{Traub_Vygen_2024}]{proposition}{existenceofcheapjoins}
\label{pro:existence_of_cheap_joins}
    Let $H$ be a $2$-edge-connected graph. For arbitrary edge weights $c: E(H) \to \R$, there is an $\odd(H)$-join $J \subseteq E(H)$ such that
    $
        c(J) \leq \tfrac{1}{3} c(E(H)).
    $
\end{restatable}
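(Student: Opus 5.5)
The plan is to prove \Cref{pro:existence_of_cheap_joins} probabilistically. I will exhibit a probability distribution over $\odd(H)$-joins $J \subseteq E(H)$ in which every edge lies in $J$ with probability exactly $\tfrac13$. Linearity of expectation then gives $\mathbb{E}[c(J)] = \tfrac13 c(E(H))$. So some join in the support satisfies $c(J) \le \tfrac13 c(E(H))$, and this holds for arbitrary (even negative) weights $c$.

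To build the distribution, I will use the fact that the vector $x \equiv \tfrac13$ on $E(H)$ lies in the $\odd(H)$-join dominant polyhedron, in fact in the $T$-join polytope with $T=\odd(H)$. Recall Edmonds--Johnson: the convex hull of $T$-joins is described by $0 \le x \le 1$ together with the parity cut constraints
\[
x(\delta(U)\setminus F) - x(F) \ge 1 - |F|
\]
for all $U \subseteq V$ and $F \subseteq \delta(U)$ with $|U \cap T| + |F|$ odd. I would rather avoid relying on this exact polytope description and argue directly via the all-ones structure, as follows.

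Note that $E(H)$ is itself an $\odd(H)$-join. For any $T$-join $J$, the complement $E(H)\setminus J$ is an even-degree set, that is, a cycle space element. So I want a distribution on even subgraphs $C$ with $\Pr[e \in C] = \tfrac23$ for every edge $e$, and then set $J = E(H) \setminus C$. Equivalently, the vector $\tfrac23\mathbf{1}$ must lie in the convex hull of even subgraphs (cycle space elements, viewed as 0/1 vectors). By Seymour's characterization of the cycle polytope for graphs without $K_5$ minors, or more simply by the known fact that the cycle cone/polytope contains $x$ whenever $0\le x\le 1$ and the cut conditions hold, it suffices to check the constraints $x(F) - x(\delta(U)\setminus F) \le |F|-1$ for every cut $\delta(U)$ and every odd $F\subseteq\delta(U)$.

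However, that characterization fails for general graphs. So I will instead verify the $T$-join polytope inequalities for $x=\tfrac13\mathbf 1$, which hold for all graphs by Edmonds--Johnson:
\begin{itemize}
\item If $|F|\ge 1$, the constraint reads $\tfrac13(|\delta(U)|-|F|) - \tfrac13|F| \ge 1-|F|$. This is equivalent to $|\delta(U)| + |F| \ge 3$. It holds because $2$-edge-connectivity gives $|\delta(U)|\ge 2$ for every proper nonempty $U$.
\item If $F=\emptyset$, then $|U\cap T|$ is odd. Here the parity of $|\delta(U)|$ equals the parity of $\sum_{v\in U}\deg(v)$, which is odd, so $|\delta(U)|$ is odd. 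Combined with $|\delta(U)|\ge 2$ this gives $|\delta(U)|\ge 3$, hence $x(\delta(U))\ge 1$.
\end{itemize}
The bounds $0\le x\le1$ are trivial. Thus $x$ lies in the $T$-join polytope, and is a convex combination of $\odd(H)$-joins, which yields the claim.

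The main obstacle is the odd-$|F|$ case bookkeeping in the parity constraint, in particular making sure that $|\delta(U)|\ge 2$ suffices when $|F|=1$. It does, since then the inequality needs $|\delta(U)|\ge 2$. If one wants a self-contained argument avoiding Edmonds--Johnson, one could instead take an ear decomposition and induct, but the polyhedral route is the one I would write.
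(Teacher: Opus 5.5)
Your final argument is correct and is essentially the paper's proof: you place $x\equiv\frac13$ in the Edmonds--Johnson polytope of $\odd(H)$-joins by reducing every constraint to $|\delta(U)|+|F|\ge 3$. The only difference is that you split into $F\neq\emptyset$ and $F=\emptyset$, where the paper uses $|\delta(U)|\equiv|U\cap\odd(H)| \pmod 2$ uniformly.

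One correction to your discarded detour: specializing the same Edmonds--Johnson description to $T=\emptyset$ gives exactly the cut-inequality description of the even-subgraph polytope, for every graph. The $K_5$-minor condition belongs to the cut polytope, not the cycle polytope, so your stated reason for switching routes was mistaken, though harmless.
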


By \Cref{lem:bound_on_completion_with_phi}, the trivial solution obtained from $S = \emptyset$ costs $2|\nn{V}|$, and a non-empty \evencover $S$ changes this cost by $w_R(S)$. Note that $w_R(\nn{E})$ is roughly $-|\nn{V}|$.
In a $2$-edge-connected graph, \Cref{pro:existence_of_cheap_joins} yields an \evencover that keeps at least two thirds of these savings. Thus, our total cost is about $2|\nn{V}| - \frac{2}{3}|\nn{V}| = \frac{4}{3}|\nn{V}|$.
This approach works well on the 2-connected blocks of $K$, but fails for the bridges of $K$. However, we can exploit that bridges of $K$ also incur additional cost for $\OPT$. Recall that $ k^* = \comp((V \setminus D, \OPT[V \setminus D])).$

\begin{lemma}\label{lem:sufficiently_cheap_even_cover}
    There exists an \evencover $S$ such that
    $$
        2|\nn{V}| + w_R(S) \leq \frac{4}{3} (|\nn{V}| + \double + k^* + |\dd{\OPT}|).
    $$
\end{lemma}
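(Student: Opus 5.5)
The plan is to build the comparison graph $K$ from a fixed optimal solution and apply \Cref{pro:existence_of_cheap_joins} separately to each $2$-edge-connected piece of $K$, with weights $w_R$. The bridges of $K$ get no savings at all; their loss is paid for by the terms $\double$, $k^*$ and $|\dd{\OPT}|$ on the right-hand side.

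\textbf{Defining $K$ and the even cover.} Let $K$ be the simple graph $(V, \operatorname{supp}(\OPT))$, i.e.\ every edge of $\OPT$ taken once. Delete the bridges $\bridges{K}$ of $K$; the remaining edges split into $2$-edge-connected components $H_1,\dots,H_m$. For each $H_j$, \Cref{pro:existence_of_cheap_joins} with weights $w_R$ gives an $\odd(H_j)$-join $J_j$ with $w_R(J_j)\le \tfrac13 w_R(E(H_j))$. Set $S_j = E(H_j)\setminus J_j$. Every vertex has even degree in $S_j$, and since $G$ is subcubic this degree lies in $\{0,2\}$. The $H_j$ are edge-disjoint, and a vertex of degree at most $3$ cannot carry two disjoint even-degree pieces of positive degree. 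Hence $S=\bigcup_j S_j$ is an \evencover with
\[
w_R(S)\le \tfrac23\, w_R\bigl(E(K)\setminus \bridges{K}\bigr).
\]

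\textbf{Bounding $w_R$ of the non-bridge edges.} Write
\[
w_R(E') = |E'_{nn}\cap R| - |E'_{nn}\setminus R| + |E'_{dd}| - \sum_{v\in\centers}\deg_{E'_{nn}}(v).
\]
I want to show that $w_R(E(K)\setminus\bridges{K}) \le -|\nn{V}| + (\text{terms charged to } \double, k^*, |\dd{\OPT}|)$, possibly with slack, so that multiplying by $\tfrac23$ and adding $2|\nn{V}|$ gives at most $\tfrac43(\cdots)$. The tools are these.
\begin{itemize}
\item Every non-depot vertex has even, positive degree in $\OPT$. So in $K$ it has degree $2$, or degree $3$ with a doubled edge, or degree $2$ with one doubled edge, and so on. Each doubled nn-edge is counted in $\double$.
\item A bridge of $K$ inside $\nn{E}$ must be doubled in $\OPT$, because $\OPT$ is Eulerian. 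So the bridges in $\nn{E}$ number at most $\double$.
\item Depot–depot edges contribute at most $|\dd{\OPT}|$.
\item Each of the $k^*$ components of $\OPT[\nn{V}]$ accounts for a deficit of one edge relative to a spanning structure, exactly as in \Cref{lem:tree_cost}.
\item The center terms $\deg(v)$ for $v\in\centers$ offset the $+1$ of the removable edges: each center has its paired removable edges incident to it, so a removable edge of $K$ at a covered center contributes net $\le 0$.
\end{itemize}

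\textbf{Main obstacle.} The hard step is this per-vertex charging: showing that $w_R$ summed over the non-bridge part of $K$ is at most $-|\nn{V}| + 2(\double+k^*+|\dd{\OPT}|)$, or a slightly weaker bound that still closes after the factor $\tfrac23$. My first attempt would assign each nn-edge half to each endpoint and check the cases for a non-depot vertex: its $K$-degree ($2$ or $3$), whether it is a center, and whether it is incident to a bridge or a doubled edge. The goal is to verify that each vertex receives at most $-1$ plus its share of the charges. Removable non-center endpoints contribute $+\tfrac12$, and this is the delicate case. I expect to need the DFS structure: a removable edge's endpoint that is not a center is an ancestor endpoint of a back edge. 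If the local charging fails there, I would instead charge the excess to the $\tfrac13$ slack lost on bridges, using $\tfrac43\double$ versus $\double$.
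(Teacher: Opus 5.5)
\textbf{There is a genuine gap: the comparison graph $K=(V,\operatorname{supp}(\OPT))$ is too small, and for it the key inequality is false.} The rest of your architecture matches the paper: even covers on the $2$-edge-connected pieces of a comparison graph via \Cref{pro:existence_of_cheap_joins}, with bridges paid for by $\double$ and $k^*$. Your target $w_R(E(K)\setminus\bridges{K})\le -|\nn{V}|+2(\double+k^*)+|\dd{\OPT}|$ is also exactly what is needed. With your $K$, however, it cannot be proved by any per-vertex charging.

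The $-|\nn{V}|$ is not a local property of $\OPT$'s edges. It is the global count $w_R(\nn{E})=2|R|-|\nn{E}|-3|\centers|=-|\nn{V}|+k+\ell$, which rests on \eqref{equ:num_removable_reformulated} and on every center having degree $3$ in $\nn{G}$. Every paired removable edge contains its center, so its weight is at most, and typically exactly, $0$. The negative weight lives on the non-removable edges and on the parent edge of each center. An optimal tour may pass a center along both of its paired edges and skip exactly those negative edges.

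Here is a concrete counterexample; write $ij$ for the edge $\{i,j\}$.
\begin{enumerate}
\item Take non-depot vertices $1,\dots,6$ with edges $12,23,34,45,56,25,36,16$, and one depot $x$ adjacent only to $1$.
\item The DFS from $r_C=1$ may return the path $1,2,\dots,6$. Then $25$ is paired with $23$ (center $2$), $36$ is paired with $34$ (center $3$), and $16$ is unpaired.
\item So $w_R$ takes the values $-2,-1,0,-1,-1,0,0,1$ on $12,23,34,45,56,25,36,16$.
\item The walk $x,1,2,5,4,3,6,1,x$ is optimal (cost $8$). For it $\double=0$, $k^*=1$ and $\dd{\OPT}=\emptyset$, so the right-hand side of the lemma is $28/3$. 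The lemma refers to the fixed $\OPT$, and this is a legitimate choice of it.
\item Your $K$ is the cycle $H=\{12,25,45,34,36,16\}$ plus the bridge $x1$. Its only even covers are $\emptyset$ and $H$, and $w_R(H)=-2$.
\item So the best you can reach is $12-2=10>28/3$.
\end{enumerate}
Here $\double=0$ and $\nn{E}$ has no bridges, so your fallback of charging to bridge slack has nothing to charge to. Chaining copies of this gadget along the DFS path makes the shortfall grow linearly in $|\nn{V}|$.

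\textbf{The missing idea is to put all of $\nn{E}$ into the comparison graph}, as the paper does with $K=(V,\nn{E}\cup(\OPT_1\cap\nd{E})\cup\dd{\OPT})$. Here $\OPT_1$ denotes the edges $\OPT$ uses once.
\begin{itemize}
\item With this $K$, the two edges of every pair lie on a common cycle. Each center keeps all three incident edges unless the third is a bridge of $K$, and there are at most $|\bridges{K}|$ such bad pairings.
\item The identity above then gives $w_R(E(K)\setminus\bridges{K})\le -|\nn{V}|+k+\ell+2|\bridges{K}|+|\dd{\OPT}|$.
\item The price of the larger $K$ is a different bridge count. A bridge of $K$ used by $\OPT$ cannot lie in the even edge set $\OPT_1\subseteq E(K)$, so it is doubled and counted by $\double$.
\item A bridge of $K$ not used by $\OPT$ is a bridge of $\nn{G}$ that $\OPT$ avoids. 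Deleting all $b$ such bridges from $\nn{G}$ leaves $k+b$ components, each a union of components of $(\nn{V},\nn{\OPT})$, so $b\le k^*-k$.
\item Hence $|\bridges{K}|\le \double+k^*-k$, and $\ell\le k$ closes the bound.
\end{itemize}
In the example, this $K$ contains the cycle $1,2,\dots,6,1$ of weight $-4$, giving $8\le 28/3$.
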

\begin{proof}
    For this proof, a \emph{block} of a graph refers to either a maximal $2$-(vertex-)connected subgraph or a bridge of that graph.
    We begin by constructing the comparison graph $K$, then show the existence of sufficiently cheap \evencovers w.r.t. $w_R$ on the blocks of $K$ with the goal of combining them to a global \evencover of the original graph. To find cheap \evencovers, we use $T$-joins and \Cref{pro:existence_of_cheap_joins}. 
    Denoting by $\OPT_1$ the edges our optimal solution $\OPT$ uses with multiplicity one, we define the comparison graph
    \[K \coloneqq (V, \nn{E} \cup (\OPT_1 \cap \nd{E}) \cup \dd{\OPT}).\]
    We want to find cheap \evencovers on the 2-connected blocks of $K$. Let $\bridges{K}$ denote the set of bridges of $K$. By definition these are not 2-connected. Observe that in subcubic graphs the $2$-connected blocks are vertex-disjoint and connected via the bridges of the graph.
    
    Let $B$ be a $2$-connected block in $K$ and let $J_B$ be a minimum $\odd(E(B))$-join in $B$ with respect to $-w_R$. Now, $S_B = E(B) \setminus J_B$ is an \evencover by construction. By \Cref{pro:existence_of_cheap_joins}, we conclude $w_R(S_B) \leq \frac{2}{3} w_R(E(B))$. Denote by $S$ the union of the \evencovers over all $2$-connected blocks which is clearly again an \evencover, and it fulfills
    \begin{equation}
    \label{eq:final_cost_S}
        2|\nn{V}| + w_R(S) \leq 2|\nn{V}| + \frac{2}{3} w_R(E(K) \setminus \bridges{K}).
    \end{equation}
    
    To bound $w_R(E(K) \setminus \bridges{K})$, observe that $\nn{E} \subseteq E(K)$. Hence, for every pairing $P \in \P$, the paired edges are always part of a common 2-connected block $B$ in $K$. However, the third edge incident to the center of $P$ might not be part of $B$. If this happens we say that $P$ is a \emph{bad pairing}, and the third edge must be a bridge of $K$. Hence, the number of bad pairings is at most $|\bridges{K}|$. For bad pairings $P\in\P$, their center is incident to 2 edges in $E(K) \setminus \bridges{K}$ and for all other pairings, their center is incident to 3 edges in $E(K) \setminus \bridges{K}$. 
    With this and \cref{equ:num_removable_reformulated} we obtain 
    \begin{align*}
        w_R(E(K) \setminus \bridges{K}) &\leq |R| - (|\nn{E}| - |R| - |\bridges{K}|) - (3|\centers| - |\bridges{K}|) + |\dd{\OPT}| \\
        &= 2(|R| - |\centers|) - |\centers| - |\nn{E}| + 2 |\bridges{K}| + |\dd{\OPT}| \\
        &\overset{(\ref{equ:num_removable_reformulated})}{=} 2(|\nn{E}| - |\nn{V}| + k) - (|\nn{E}| - |\nn{V}| + k - \ell) \\ 
        &\qquad - |\nn{E}| + 2 |\bridges{K}| + |\dd{\OPT}| \\
        &= - |\nn{V}| + k + \ell + 2 |\bridges{K}| + |\dd{\OPT}|.
    \end{align*}

    We can tighten this bound further by bounding the number of bridges $|\bridges{K}|$. Bridges in $K$ are either also in $\OPT$, in which case they have multiplicity two in $\OPT$ and are thus bounded by $\double$ or they are unused in $\OPT$ in which case they are also bridges in $\nn{G}$. The former holds since $\OPT_1 \subseteq E(K)$ by construction. By removing each such bridge from $\nn{G}$, we obtain at most $k^*$ components, hence at most $k^* - k$ such bridges. That is, $|\bridges{K}| \leq \double + (k^* - k)$. Plugging this into the previous inequality, we obtain
        $ w_R(E(K) \setminus \bridges{K}) \leq 2k^* + 2\double - |\nn{V}| - k + \ell + |\dd{\OPT}|.$
    Inserting this in \eqref{eq:final_cost_S} and recalling that $\ell \leq k$, we get the desired bound
    \begin{align*}
    2|\nn{V}| + \frac{2}{3} w_R(E(K) \setminus \bridges{K}) &\leq 2|\nn{V}| + \frac{2}{3}(2k^* + 2\double - |\nn{V}| - k + \ell + |\dd{\OPT}|) \\
    &\leq \frac{4}{3}(|\nn{V}| + k^* + \double) - \frac{2}{3}k + \frac{2}{3}\ell + \frac{2}{3}|\dd{\OPT}| \\
    &\leq \frac{4}{3}(|\nn{V}| + k^* + \double) + \frac{2}{3}|\dd{\OPT}| \\
    &\leq \frac{4}{3}(|\nn{V}| + k^* + \double + |\dd{\OPT}|). \qedhere
    \end{align*}
\end{proof}

Now, we connect the cost obtained in the previous lemma to $\OPT$.

\begin{lemma}\label{lem:opt_lower_bound}
    It holds that $|\OPT\,| \geq |\nn{V}| + \double + k^* + |\dd{\OPT}|.$
\end{lemma}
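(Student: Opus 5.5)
Our plan is to split the multi-edge set $\OPT$ into $\nn{\OPT}$, $\nd{\OPT}$ and $\dd{\OPT}$, and to lower bound the first two parts by a refinement of the counting argument in \Cref{lem:tree_cost}. Since $\dd{\OPT}$ appears in the claimed bound with coefficient one, it remains to show
\[
|\nn{\OPT}| + |\nd{\OPT}| \;\geq\; |\nn{V}| + \double + k^*,
\]
where multi-edges are counted with multiplicity. Without loss of generality we assume $\OPT$ uses every edge at most twice; this is standard, since removing two copies of an edge used three or more times preserves parity and connectivity.

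\textbf{Step 1: the non-depot part.} Consider the multigraph $(\nn{V}, \nn{\OPT})$. By definition of $k^*$ it has $k^*$ components, so its underlying simple edge set contains a spanning forest with $|\nn{V}| - k^*$ edges. Each of the $\double$ edges of $\nn{E}$ that $\OPT$ uses twice contributes one extra copy beyond the simple support. Hence
\[
|\nn{\OPT}| \;\geq\; |\nn{V}| - k^* + \double .
\]

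\textbf{Step 2: the depot edges.} As in \Cref{lem:tree_cost}, every component $Q$ of $(\nn{V}, \nn{\OPT})$ must reach a depot, so $\delta(Q) \cap \OPT \neq \emptyset$. All edges of $\delta(Q)\cap\OPT$ go to depots, because non-depot neighbours of $Q$ connected by $\OPT$-edges would lie in $Q$ itself. Since $\OPT$ is Eulerian, $|\delta(Q)\cap \OPT|$ is even, so it is at least $2$, counted with multiplicity. Each edge of $\nd{\OPT}$ has exactly one non-depot endpoint and is therefore counted for exactly one $Q$. This gives $|\nd{\OPT}| \geq 2k^*$.

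Adding the two steps yields $|\nn{\OPT}| + |\nd{\OPT}| \geq |\nn{V}| + \double + k^*$. Adding $|\dd{\OPT}|$ then gives the lemma.

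\textbf{Main care point.} The only subtle step is making sure the $\double$ doubled edges are truly extra. The forest bound in Step 1 uses only the simple support of $\nn{\OPT}$, so the second copies of doubled edges are not already counted there and can be added on top. We must also not double-count doubled depot edges: these do not appear in $\double$, which counts only edges of $\nn{E}$, and they are fully absorbed in the parity bound of Step 2. So this part is just bookkeeping and needs no further ideas.
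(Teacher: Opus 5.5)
Your proof is correct and is essentially the paper's argument. The paper likewise uses that the support of $\nn{\OPT}$ has at least $|\nn{V}|-k^*$ edges, adds one extra copy for each of the $\double$ doubled edges, and charges at least two edges of $\nd{\OPT}$ to each of the $k^*$ components before adding $|\dd{\OPT}|$; your parity argument for the two depot edges per component just makes explicit a step the paper states without justification, and your ``without loss of generality'' on multiplicities holds automatically for an optimal solution.
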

\begin{proof}
    First, observe that the support of $\nn{\OPT}$ has at least $|\OPT_1 \cap \nn{\OPT}| + \double \geq |\nn{V}| - k^*$ edges,
    where the last inequality holds since $k^* = \comp((\nn{V}, \nn{\OPT}))$ by definition. Lastly, since every component in $(\nn{V}, \nn{\OPT})$ must have two adjacent edges in $\nd{\OPT}$, we obtain that
    \begin{equation*}
        |\OPT| \geq |\OPT_1 \cap \nn{\OPT}| + 2\double + 2k^* + |\dd{\OPT}|\geq |\nn{V}| + \double + k^* + |\dd{\OPT}|. \qedhere
    \end{equation*}
\end{proof}

Our final algorithm is \Cref{alg:4_3_gmdm_tsp}; its analysis is straightforward now.
\begin{proof}[Proof of \Cref{thm:gmdm-tsp-subcubic}]
    \Cref{alg:4_3_gmdm_tsp} has polynomial running time as we can efficiently compute a $w_R$-minimizing \evencover by \cref{lem:minimize_cost_function}. 
    Combining Lemmas \ref{lem:sufficiently_cheap_even_cover} and \ref{lem:opt_lower_bound}, we obtain
\[
    |\ALG| = |\completion{S_0}| \leq 2|\nn{V}| + w_R(S_0) \leq \frac{4}{3} (|\nn{V}| + \double + k^* + |\dd{\OPT}|) \leq \frac{4}{3} |\OPT|. \qedhere
\]
\end{proof}
\section{A 1.9304-Approximation for Graphic Multi-Path TSP}
\label{sec:graphic_multpath}
In this section, we apply \cref{thm:gmdm-tsp-general} to improve the best known approximation factor for Graphic Multi-Path TSP. The goal of the section is to prove the following theorem.

\thmmultipath*

For ease of presentation, we describe a randomized version of our algorithm in this section. However, this algorithm can be derandomized similarly to \cite{graphic_multipath_paper} using a standard conditional expectation argument. A more detailed description of the derandomization can be found in Appendix \ref{app:derandomization}.

Our algorithm works as follows: At first, we follow the approach of \cite{graphic_multipath_paper} and sample paths from an optimal solution of an LP-relaxation of Graphic Multi-Path TSP. However, the union of these sampled paths may not cover all vertices, and may thus be infeasible. Because of this, we need a \emph{reconnection step} that ensures coverage of all vertices. We use two different reconnection approaches and output the cheaper of the two. The first is identical to \cite{graphic_multipath_paper} and simply reconnects each unconnected vertex greedily for a cost of $2$ per vertex. The second uses the sampled paths to build an auxiliary instance of \Problem and uses \cref{thm:gmdm-tsp-general} to compute a $3/2$-approximate solution for this instance. This solution is then augmented to a feasible solution for Graphic Multi-Path TSP in a parity correction step.

\subsection{Sampling Paths}
\label{sec:sampling_step}

The sampling procedure of the paths follows Alimi et al.~\cite{graphic_multipath_paper} and we refer to the full paper for more details. The algorithm first bidirects every edge in the graph and solves a multi-commodity flow LP-relaxation (see Appendix \ref{LP}). The algorithm then decomposes the flow of each commodity into paths and cycles and independently samples one path per commodity with probability equal to the weight coefficient of that path in the decomposition. We denote by $z_v$ the coverage of $v$ by flow, i.e., the total multi-commodity flow \emph{leaving} $v$. Similarly, $z_v^P$ denotes the \emph{coverage of $v$ by paths}, more precisely the flow leaving $v$ after discarding all cycles of the path-and-cycle decomposition.

We can split the expected path sampling cost among the vertices by making every vertex $v$ responsible for paying all edges that leave $v$. With this, $\mathbb{E}[\text{sampling cost of $v$}] = z_v^P$. \\

\noindent\textbf{Notation used in the analysis.} We now introduce several quantities that will be important for our improved analysis. First, we write $S$ and $T$ for the sets of all sources and sinks, respectively. For the remainder of this section, for some collection of sampled paths $P_1,\ldots,P_k$ we denote by
$\P = P_1 \, \dot{\cup} \, \ldots \, \dot{\cup} \, P_k$
the multi-set containing all the sampled paths. For $v\in V$ we define $\ell_v$ to be the number of edges of $\P$ leaving $v$.
Hence, we have
\begin{equation}
\label{eq:bounds_lv}
|\P| = \sum_{v\in V} \ell_v \quad\text{and}\quad \mathbb{E}[\ell_v] = z_v^P \quad \forall v \in V.
\end{equation}

Furthermore, we denote by $q_v$ the probability that there is no sampled path containing $v$, i.e., $v$ is not covered
\begin{equation}
\label{eq:bounds_qv}
    q_v \coloneqq \mathbb{P}[\text{$v$ not covered by any path}].
\end{equation}
We now introduce the quantities $\varphi_v$ for $v \in V$ and $\Phi$ which we will use to compare the two candidate solutions. For $v \in V$ let
\[\varphi_v \coloneqq \begin{cases}z_v^P + (1-q_v), & v \not\in S \cup T, \\ 3z_v^P, & v \in S\cup T,\end{cases} \quad\text{and}\quad \Phi \coloneqq \sum_{v\in V} \varphi_v.\]
Intuitively, if a large part of the LP-mass in the path-and-cycle decomposition is on paths, both $z_v^P$ and the probability $(1-q_v)$ that $v$ is covered by $P$ must be large for many vertices, and thus $\Phi$ is large. In this case, the first candidate solution gives a guarantee of $2|\OPT| - \tfrac{e-2}{2e-1}\Phi$ which is better than $2|\OPT|$ for large values of $\Phi$. Otherwise, the second candidate solution gives a guarantee of $\tfrac{3}{2}|\OPT| + \Phi$ which is better than $2|\OPT|$ for small values of $\Phi$.

\subsection{First Candidate for Reconnection}
\label{sec:first_candidate}
To obtain the first candidate solution $F_1$, we follow the approach of \cite{graphic_multipath_paper} and simply reconnect every vertex that is not covered by the union of the sampled paths for a cost of two. Clearly, since the underlying graph is unweighted and connected, this is possible using a simple greedy approach. We denote the resulting candidate solution by $F_1$.

\begin{restatable}{lemma}{lemmacostfone}
\label{lem:cost_F1}
		For the solution $F_1$ it holds that $\mathbb{E}[|F_1|] \leq 2|\OPT\,| - \tfrac{e-2}{2e-1}\Phi$.
\end{restatable}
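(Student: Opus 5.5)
The plan is to compare $F_1$ with the LP value vertex by vertex. The key inequality is a convexity bound on a one-variable function, and that is where the constant $\tfrac{e-2}{2e-1}$ comes from. Write $c \coloneqq \tfrac{e-2}{2e-1}$.

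\textbf{Step 1 (cost decomposition).} The solution $F_1$ consists of the sampled paths $\P$ plus two edges for each uncovered vertex. Using \eqref{eq:bounds_lv} and \eqref{eq:bounds_qv}, this gives
\[
\mathbb{E}[|F_1|] = \sum_{v\in V} \bigl(z_v^P + 2q_v\bigr).
\]
The LP value equals $\sum_v z_v$ (each vertex pays for the flow leaving it), and it is at most $|\OPT|$. So it suffices to show, for every vertex $v$,
\[
z_v^P + 2q_v \le 2z_v - c\,\varphi_v .
\]
For this I will use two facts. First, $z_v \ge z_v^P$, because discarding cycles only removes flow. Second, $z_v \ge 1$ for every non-terminal vertex, by the LP coverage constraint.

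\textbf{Step 2 (terminals).} Let $v \in S \cup T$. Such a vertex is always covered by the sampled path of its own commodity, so $q_v = 0$. The required inequality becomes $(1+3c)z_v^P \le 2z_v$. This holds because $1+3c < 2$ and $z_v \ge z_v^P$.

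\textbf{Step 3 (non-terminals: bounding $q_v$).} Let $v \notin S\cup T$. Paths are sampled independently per commodity, so $q_v = \prod_i (1-p_{v,i})$, where $p_{v,i}$ is the probability that the sampled path of commodity $i$ visits $v$. Hence $q_v \le \exp(-\sum_i p_{v,i})$. Every path through a non-terminal $v$ leaves $v$ at least once, but a path could leave $v$ several times. I would therefore argue, as in \cite{graphic_multipath_paper}, that the decomposed paths may be taken to be simple, or at least that $\sum_i p_{v,i} \ge \min(z_v^P,1)$. Either way this yields
\[
q_v \le e^{-x} \quad\text{with } x \coloneqq \min(z_v^P,1).
\]
I expect this step to be the main obstacle, since it is where the structure of the sampling must be handled carefully.

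\textbf{Step 4 (non-terminals: the one-variable inequality).} We need
\[
(1+c)z_v^P + (2-c)q_v + c \le 2z_v .
\]
Consider first the case $z_v^P \le 1$, so $x = z_v^P$. Using $z_v \ge 1$, it suffices to show $g(x) \coloneqq (1+c)x + (2-c)e^{-x} + c \le 2$ on $[0,1]$. The function $g$ is convex, so it is enough to check the endpoints:
\begin{itemize}
\item $g(0) = 2$;
\item $g(1) \le 2$ is equivalent to $c\,(2e-1)/e \le (e-2)/e$, which holds with equality for our choice of $c$.
\end{itemize}
Now consider $z_v^P > 1$. Here $q_v \le e^{-1}$, and the left-hand side grows with slope $1+c < 2$ in $z_v^P$. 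Since $z_v \ge z_v^P$, the right-hand side grows with slope at least $2$, so the inequality follows from the boundary case $z_v^P = 1$.

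\textbf{Conclusion.} Summing the per-vertex inequality over all $v$ gives
\[
\mathbb{E}[|F_1|] \le 2\sum_v z_v - c\,\Phi \le 2|\OPT| - c\,\Phi ,
\]
which is the claimed bound.
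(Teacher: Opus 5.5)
Your proof is correct and follows the paper's argument essentially step for step. Both use the per-vertex inequality $z_v^P + 2q_v \le 2z_v - c\,\varphi_v$, summed against $\sum_v z_v \le |\OPT|$. Both handle terminals via $3c<1$ and split non-terminals at $z_v^P = 1$ with the tight identity $1-2c-(2-c)e^{-1}=0$, and the paper likewise imports the bound $q_v \le e^{-z_v^P}$ from simplicity of the decomposed paths. The only cosmetic difference is the case $z_v^P \ge 1$: you use the weaker bound $q_v \le e^{-1}$ with a slope comparison, whereas the paper keeps $q_v \le e^{-z_v^P}$ and checks that $h(z)=(1-c)z-c-(2-c)e^{-z}$ is increasing with $h(1)=0$; both work.
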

The proof bounds the cost per vertex, making every vertex pay for its leaving edges and using $ \sum_v z_v \leq |OPT|$. 
While \cite{graphic_multipath_paper} used a bound of $2z_v$ on the cost per vertex, here we save $\gamma \varphi_v$ compared to that, where $\gamma = (e-2)/(2e-1)$. 
For vertices $v \in S \cup T$ this is straightforward, while for vertices $v \notin S \cup T$ we do a case distinction between $z_v^P \leq 1$ and $z_v^P \geq 1$. 
We defer the full proof to Appendix \ref{proof:cost_f1}.

\subsection{Second Candidate for Reconnection}
\label{sec:second_candidate}

We construct the second candidate solution $F_2$ as follows: Beginning with the union of all sampled paths $\P$, we contract all components of $\P$ into single vertices, and denote the set of these vertices by $D'$ and the resulting graph by $G'$. We apply the $3/2$-approximation for \Problem on the contracted instance $I' = (G', D')$ to compute a solution $ALG(I')$. We claim that $|\OPT(I')|\leq |\OPT(I)|$ where $|\OPT(I')|$ is the optimal objective value of $I'$ seen as an instance of \Problem.

\newpage
\begin{observation}
\label{obs:obj_value_aux_instance}
It holds that $|\OPT(I')| \leq |\OPT(I)|$.
\end{observation}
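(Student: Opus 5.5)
The plan is to take an optimal Multi-Path solution of $I$, map it into the contracted graph $G'$, and check that its image is a feasible solution of the Graphic Multi-Depot TSP instance $I'=(G',D')$ whose cost is no larger than $|\OPT(I)|$.

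Fix an optimal solution of $I$, consisting of walks $Q_1,\dots,Q_k$ with $Q_i$ an $s_i$-$t_i$ walk, and let $M$ be the multiset of edges they use. Let $\pi\colon V\to V(G')$ be the contraction map. It sends every vertex lying in a component of the sampled paths $\P$ to the corresponding depot in $D'$, and every other vertex to itself.

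Each edge $\{u,v\}$ of $M$ becomes $\{\pi(u),\pi(v)\}$ in $G'$. Edges with $\pi(u)=\pi(v)$ turn into loops; we discard these. Call the resulting multiset $M'$. Then $|M'|\le |M| = |\OPT(I)|$, and we have to show that $M'$ is feasible for $I'$.

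For the parity condition, view each walk $Q_i$ in $G'$ as the walk $\pi(Q_i)$, with loop steps deleted. Its endpoints are $\pi(s_i)$ and $\pi(t_i)$. The sampled path $P_i$ runs from $s_i$ to $t_i$, so both endpoints lie in the same component of $\P$, and therefore $\pi(s_i)=\pi(t_i)$ is a single depot. Hence $\pi(Q_i)$ is a closed walk, and $M'$ is a union of closed walks, i.e.\ an Eulerian multi-edge set.

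For the connectivity condition, take any non-depot vertex $v$ of $G'$. It is a vertex of $G$ not covered by $\P$. Since the optimal walks cover every vertex, some $Q_i$ passes through $v$, and $\pi(Q_i)$ is a closed walk through both $v$ and the depot $\pi(s_i)\in D'$. So every non-depot vertex of $G'$ is connected in $M'$ to a depot.

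Together these give $|\OPT(I')|\le |M'|\le|\OPT(I)|$. The only step that needs care is the endpoint identification $\pi(s_i)=\pi(t_i)$. It relies on every commodity receiving a sampled $s_i$-$t_i$ path, so that $s_i$ and $t_i$ are always contracted together. This holds by construction of the sampling, since exactly one path is sampled per commodity. Contraction may create parallel edges, but this is harmless because solutions are multi-edge sets anyway.
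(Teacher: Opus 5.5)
Your proof is correct and follows the paper's own argument. You map an optimal Multi-Path solution through the contraction and note that $s_i$ and $t_i$ land on the same depot because the sampled path $P_i$ joins them. Each walk then becomes a closed walk through a depot that still covers every uncontracted vertex, at no greater cost. You also spell out details the paper leaves implicit: discarding loops, allowing parallel edges, and checking that a non-depot vertex is not lost on the image walk.
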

Indeed, consider any optimal solution $F$ for the original instance $I$ and let $F^*$ be the same solution after contracting $\P$. The contraction identifies the source-sink pairs so each walk in $F$ becomes a closed walk through a depot in $I'$. These walks cover all vertices and contraction cannot increase their total cost and hence $|F^*| \leq |F|$.

\begin{figure}[!t]
    \centering
    \captionsetup[subfigure]{hypcap=true}
    \begin{minipage}{1.0\linewidth}
        \centering
        \begin{subfigure}[t]{0.48\linewidth}
            \centering
            \resizebox{\linewidth}{!}{
                \begin{tikzpicture}[reset cm, yscale=0.65, xscale=0.9]
\def\fTwoCoordinates{%
  \coordinate (a) at (.50,2.20);
  \coordinate (b) at (1.80,2.20);
  \coordinate (c) at (3.10,2.20);
  \coordinate (d) at (4.45,2.20);
  \coordinate (e) at (.50,.95);
  \coordinate (f) at (1.80,.95);
  \coordinate (g) at (3.10,.95);
  \coordinate (h) at (4.45,.95);
  \coordinate (i) at (3.10,-.15);
  \coordinate (s1) at (.75,3.35);
  \coordinate (t1) at (5.85,2.20);
  \coordinate (s2) at (1.80,-.15);
  \coordinate (t2) at (3.10,3.35);
  \coordinate (s3) at (4.45,-.15);
  \coordinate (t3) at (5.85,.95);
}
\def\fTwoBaseGraph{%
  \foreach \from/\to in {
    a/b,a/e,e/f,f/g,b/c,c/d,c/g,c/t2,e/s2,g/s2,g/i,i/s2,
    s1/b,d/t1,f/s2,b/f,b/t2,h/s3,h/t3,i/s3,g/h}
    \draw[Normal Edge] (\from)--(\to);
}
\def\fTwoSampledPaths{%
  \draw[Normal Edge,draw=rwth-green, densely dotted] (s1)--(b)--(c)--(d)--(t1);
  \draw[Removable1,dashed] (s2)--(f)--(b)--(t2);
  \draw[Normal Edge,draw=rwth-purple,dashdotted] (s3)--(h)--(t3);
}
\def\fTwoLiftedTours{%
  \draw[Normal Edge, dashdotted, draw=rwth-turquoise] (b)--(a)--(e)--(s2);
  \draw[Normal Edge, dashdotted, draw=rwth-turquoise] (s2)--(g)--(i)--cycle;
}
\def\fTwoOriginalNodes{%
  \foreach \vertex in {a,b,c,d,e,f,g,h,i}
    \node[Normal Node] at (\vertex) {};
  \foreach \vertex in {s1,t1}
    \node[Normal Node,draw=rwth-green] at (\vertex) {};
  \foreach \vertex in {s2,t2}
    \node[Normal Node,draw=rwth-red] at (\vertex) {};
  \foreach \vertex in {s3,t3}
    \node[Normal Node,draw=rwth-purple] at (\vertex) {};
  \node[Label,anchor=south,inner sep=1pt] at ($(s1)+(-0.35,-0.1)$) {$s_1$};
  \node[Label,anchor=west,inner sep=1pt] at (6.05,2.20) {$t_1$};
  \node[Label,anchor=north,inner sep=1pt] at ($(s2)+(-0.35,0.05)$) {$s_2$};
  \node[Label,anchor=south,inner sep=1pt] at ($(t2)+(0.35,-0.1)$) {$t_2$};
  \node[Label,anchor=north,inner sep=1pt] at ($(s3)+(0.35,0.05)$) {$s_3$};
  \node[Label,anchor=west,inner sep=1pt] at (6.05,.95) {$t_3$};
}
  \fTwoCoordinates
  \fTwoBaseGraph
  \fTwoOriginalNodes
\end{tikzpicture}%
            }
            \caption{Original graph $G$.}
            \label{fig:1.93-a}
        \end{subfigure}
        \hfill
        \begin{subfigure}[t]{0.48\linewidth}
            \centering
            \resizebox{\linewidth}{!}{
                \begin{tikzpicture}[reset cm, yscale=0.65, xscale=0.9]
\def\fTwoCoordinates{%
  \coordinate (a) at (.50,2.20);
  \coordinate (b) at (1.80,2.20);
  \coordinate (c) at (3.10,2.20);
  \coordinate (d) at (4.45,2.20);
  \coordinate (e) at (.50,.95);
  \coordinate (f) at (1.80,.95);
  \coordinate (g) at (3.10,.95);
  \coordinate (h) at (4.45,.95);
  \coordinate (i) at (3.10,-.15);
  \coordinate (s1) at (.75,3.35);
  \coordinate (t1) at (5.85,2.20);
  \coordinate (s2) at (1.80,-.15);
  \coordinate (t2) at (3.10,3.35);
  \coordinate (s3) at (4.45,-.15);
  \coordinate (t3) at (5.85,.95);
}
\def\fTwoBaseGraph{%
  \foreach \from/\to in {
    a/b,a/e,e/f,f/g,c/g,c/t2,e/s2,g/s2,g/i,i/s2,i/s3,g/h}
    \draw[UnusedEdge] (\from)--(\to);
}
\def\fTwoSampledPaths{%
  \draw[Normal Edge,draw=rwth-green, dashed] (s1)--(b)--(c)--(d)--(t1);
  \draw[Removable1,dashed] (s2)--(f)--(b)--(t2);
  \draw[Normal Edge,draw=rwth-purple,dashed] (s3)--(h)--(t3);
}
\def\fTwoLiftedTours{%
  \draw[Normal Edge, dashdotted, draw=rwth-turquoise] (b)--(a)--(e)--(s2);
  \draw[Normal Edge, dashdotted, draw=rwth-turquoise] (s2)--(g)--(i)--cycle;
}
\def\fTwoOriginalNodes{%
  \foreach \vertex in {a,b,c,d,e,f,g,h,i}
    \node[Normal Node] at (\vertex) {};
  \foreach \vertex in {s1,t1}
    \node[Normal Node,draw=rwth-green] at (\vertex) {};
  \foreach \vertex in {s2,t2}
    \node[Normal Node,draw=rwth-red] at (\vertex) {};
  \foreach \vertex in {s3,t3}
    \node[Normal Node,draw=rwth-purple] at (\vertex) {};
  \node[Label,anchor=south,inner sep=1pt] at ($(s1)+(-0.35,-0.1)$) {$s_1$};
  \node[Label,anchor=west,inner sep=1pt] at (6.05,2.20) {$t_1$};
  \node[Label,anchor=north,inner sep=1pt] at ($(s2)+(-0.35,0.05)$) {$s_2$};
  \node[Label,anchor=south,inner sep=1pt] at ($(t2)+(0.35,-0.1)$) {$t_2$};
  \node[Label,anchor=north,inner sep=1pt] at ($(s3)+(0.35,0.05)$) {$s_3$};
  \node[Label,anchor=west,inner sep=1pt] at (6.05,.95) {$t_3$};
}
  \fTwoCoordinates
  \fTwoBaseGraph
  \fTwoSampledPaths
  \fTwoOriginalNodes
\end{tikzpicture}%
            }
            \caption{Sampled paths $\mathcal P$.}
            \label{fig:1.93-b}
        \end{subfigure}

        \par\medskip

        \begin{subfigure}[t]{0.48\linewidth}
            \centering
            \resizebox{\linewidth}{!}{
                \begin{tikzpicture}[reset cm, yscale=0.65, xscale=0.9]
\def\fTwoCoordinates{%
  \coordinate (a) at (.50,2.20);
  \coordinate (b) at (1.80,2.20);
  \coordinate (c) at (3.10,2.20);
  \coordinate (d) at (4.45,2.20);
  \coordinate (e) at (.50,.95);
  \coordinate (f) at (1.80,.95);
  \coordinate (g) at (3.10,.95);
  \coordinate (h) at (4.45,.95);
  \coordinate (i) at (3.10,-.15);
  \coordinate (s1) at (.75,3.35);
  \coordinate (t1) at (5.85,2.20);
  \coordinate (s2) at (1.80,-.15);
  \coordinate (t2) at (3.10,3.35);
  \coordinate (s3) at (4.45,-.15);
  \coordinate (t3) at (5.85,.95);
}
\def\fTwoBaseGraph{%
  \foreach \from/\to in {
    a/b,a/e,e/f,f/g,b/c,c/d,c/g,c/t2,e/s2,g/s2,g/i,i/s2,
    s1/b,d/t1,f/s2,b/f,b/t2,h/s3,h/t3,i/s3,g/h}
    \draw[Normal Edge] (\from)--(\to);
}
\def\fTwoSampledPaths{%
  \draw[Normal Edge,draw=rwth-green, densely dotted] (s1)--(b)--(c)--(d)--(t1);
  \draw[Removable1,dashed] (s2)--(f)--(b)--(t2);
  \draw[Normal Edge,draw=rwth-purple,dashdotted] (s3)--(h)--(t3);
}
\def\fTwoLiftedTours{%
  \draw[Normal Edge, dashdotted, draw=rwth-turquoise] (b)--(a)--(e)--(s2);
  \draw[Normal Edge, dashdotted, draw=rwth-turquoise] (s2)--(g)--(i)--cycle;
}
\def\fTwoOriginalNodes{%
  \foreach \vertex in {a,e,i}
    \node[Normal Node] at (\vertex) {};
}
  \fTwoCoordinates
  \fTwoOriginalNodes
  \draw[UnusedEdge] (g)--(t3);
  \draw[UnusedEdge] (i)--(t3);
  \draw[Normal Edge, dashdotted, draw=rwth-turquoise] (b)--(a)--(e)--cycle;
  \draw[Normal Edge, dashdotted, draw=rwth-turquoise] (b)--(g)--(i)--cycle;
  \node[Depot, draw=white] at (s1) {};
  \node[Label,anchor=south,inner sep=1pt] at ($(s1)+(-0.35,-0.1)$) {\phantom{$s_1$}};
  \node[Label,anchor=west,inner sep=1pt] at (6.05,2.20) {\phantom{$t_1$}};
  \node[Label,anchor=north,inner sep=1pt] at ($(s2)+(-0.35,0.05)$) {\phantom{$s_2$}};
  \node[Label,anchor=south,inner sep=1pt] at ($(t2)+(0.35,-0.1)$) {\phantom{$t_2$}};
  \node[Label,anchor=west, inner sep=1pt] at (6.02,.95) {\phantom{$t_3$}};
  \node[Depot] at (b) {};
  \node[Depot] at (t3) {};
  \foreach \vertex in {a,e,g,i}
    \node[Normal Node] at (\vertex) {};
  \node[Label,anchor=south,inner sep=1pt] at (1.80,2.4) {$d_{12}$};
  \node[Label,anchor=north,inner sep=1pt] at (5.85,.75) {$d_3$};
\end{tikzpicture}%
            }
            \caption{Contracted Graphic Multi-Depot TSP solution.}
            \label{fig:1.93-c}
        \end{subfigure}
        \hfill
        \begin{subfigure}[t]{0.48\linewidth}
            \centering
            \resizebox{\linewidth}{!}{
                \begin{tikzpicture}[reset cm, yscale=0.65, xscale=0.9]
\def\fTwoCoordinates{%
  \coordinate (a) at (.50,2.20);
  \coordinate (b) at (1.80,2.20);
  \coordinate (c) at (3.10,2.20);
  \coordinate (d) at (4.45,2.20);
  \coordinate (e) at (.50,.95);
  \coordinate (f) at (1.80,.95);
  \coordinate (g) at (3.10,.95);
  \coordinate (h) at (4.45,.95);
  \coordinate (i) at (3.10,-.15);
  \coordinate (s1) at (.75,3.35);
  \coordinate (t1) at (5.85,2.20);
  \coordinate (s2) at (1.80,-.15);
  \coordinate (t2) at (3.10,3.35);
  \coordinate (s3) at (4.45,-.15);
  \coordinate (t3) at (5.85,.95);
}
\def\fTwoBaseGraph{%
  \foreach \from/\to in {
    e/f,f/g,c/g,c/t2,i/s3,g/h}
    \draw[UnusedEdge] (\from)--(\to);
}
\def\fTwoSampledPaths{%
  \draw[Normal Edge,draw=rwth-green, dashed] (s1)--(b)--(c)--(d)--(t1);
  \draw[Removable1,dashed] (s2)--(f)--(b)--(t2);
  \draw[Normal Edge,draw=rwth-purple,dashed] (s3)--(h)--(t3);
}
\def\fTwoLiftedTours{%
  \draw[Normal Edge, dashdotted, draw=rwth-turquoise] (b)--(a)--(e)--(s2);
  \draw[Normal Edge, dashdotted, draw=rwth-turquoise] (s2)--(g)--(i)--cycle;
}
\def\fTwoOriginalNodes{%
  \foreach \vertex in {a,c,d,e,f,g,h,i}
    \node[Normal Node] at (\vertex) {};
  \foreach \vertex in {s1,t1}
    \node[Normal Node,draw=rwth-green] at (\vertex) {};
  \foreach \vertex in {t2}
    \node[Normal Node,draw=rwth-red] at (\vertex) {};
  \foreach \vertex in {s3,t3}
    \node[Normal Node,draw=rwth-purple] at (\vertex) {};
  \foreach \vertex in {b}
    \node[Gray] at (\vertex) {};
  \foreach \vertex in {s2}
    \node[Gray,,draw=rwth-red] at (\vertex) {};
  \node[Label,anchor=south,inner sep=1pt] at ($(s1)+(-0.35,-0.1)$) {$s_1$};
  \node[Label,anchor=west,inner sep=1pt] at (6.05,2.20) {$t_1$};
  \node[Label,anchor=north,inner sep=1pt] at ($(s2)+(-0.35,0.05)$) {$s_2$};
  \node[Label,anchor=south,inner sep=1pt] at ($(t2)+(0.35,-0.1)$) {$t_2$};
  \node[Label,anchor=north,inner sep=1pt] at ($(s3)+(0.35,0.05)$) {$s_3$};
  \node[Label,anchor=west,inner sep=1pt] at (6.05,.95) {$t_3$};
}
  \fTwoCoordinates
  \fTwoBaseGraph
  \fTwoSampledPaths
  \fTwoLiftedTours
  \draw[Join, densely dotted] (b) to[bend left=22] (f);
  \draw[Join, densely dotted] (f) to[bend left=22] (s2);
  \fTwoOriginalNodes
  \node[Label,text=rwth-blue,fill=white,inner sep=1pt] at (2.17,1.57) {$J$};
\end{tikzpicture}%
            }
            \caption{Expanded solution $F_2$.}
            \label{fig:1.93-d}
        \end{subfigure}
    \end{minipage}

    \caption{Construction of the second candidate solution $F_2$. In (b), we sample paths indicated by colored and dashed edges. In (c), the components induced by the sampled paths are contracted into square depots and a \Problem solution is computed and shown in turquoise, dash-dotted edges. In (d), the expanded solution with wrong-parity vertices in gray is repaired by adding the edge copies $J$ shown in dotted blue. The computed solution is the union of all non-solid edges in (d).}
    \label{fig:1.93}
\end{figure}

We expand the contracted vertices $D'$ in the solution of $I'$, that is, after this step we have the current partial solution $\ALG(I') \, \dot{\cup} \, \P$. This set might not be a feasible Graphic Multi-Path TSP solution, since vertices $v \in V(\P)$ might have an odd degree $\deg_{\ALG(I')}(v)$ when only considering edges in $\ALG(I')$. 
Denote the set of these wrong-degree-parity vertices by $W$. In a final parity correction step we compute a minimum cardinality $W$-join $J$ in $\P$, and then output the solution $F_2 \coloneqq \P \, \dot{\cup} \, \ALG(I') \, \dot{\cup} \, J$. A visualization of this construction is given in \cref{fig:1.93}.
We upper bound the cost of this parity correction step using the following Lemma.

\begin{restatable}{lemma}{lemmaftwocostjoin}
\label{lem:f2_cost_join}
    Let $\P \, \dot{\cup} \, \ALG(I')$ be the partial solution after solving the auxiliary \Problem instance. We can compute a set $J \subseteq \P$ in polynomial time such that
    \[
        |J| 
        \leq |V(\P) \setminus (S\cup T)| + 2\sum_{v \in S} \ell_v
    \]
    and $F_2 \coloneqq \P \, \dot{\cup} \, \ALG(I') \, \dot{\cup} \, J$ is a feasible solution for the Graphic Multi-Path TSP.
\end{restatable}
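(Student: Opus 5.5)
The plan is to pick $J$ component by component inside $\P$ as a parity-correcting join in a spanning tree, and then bound its size by counting vertices.

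\emph{Feasibility.} Fix a connected component $C$ of $(V(\P),\P)$. In $I'$ it is contracted to a single depot $d_C\in D'$. The multi-edge set $\ALG(I')$ is Eulerian in $G'$, so $\deg_{\ALG(I')}(d_C)$ is even. After expansion this degree is split among the vertices of $C$. Hence $|W\cap V(C)|$ is even, where $W$ is the set of vertices of $C$ with odd $\ALG(I')$-degree. Therefore a $W\cap V(C)$-join exists inside any spanning tree $T_C$ of $(V(C),\P[C])$. I take $J$ to be the union over all components $C$ of these tree joins, each of which can be computed in polynomial time.

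Then every vertex has the same degree parity in $\P\disjcup\ALG(I')\disjcup J$ as in $\P$ alone. Every component of $\ALG(I')\disjcup J$ is attached to some component of $\P$: in $G'$ each non-depot vertex is connected to a depot, and after expansion that depot is a component of $\P$. So the whole multigraph consists of the walks $P_1,\dots,P_k$ together with Eulerian multigraphs attached to them. These closed parts can be spliced into the walks $P_i$ that they touch. This gives $s_i$-$t_i$ walks that cover every vertex, because $\ALG(I')$ covers all of $V\setminus V(\P)$.

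\emph{Cost.} We have $|J|\le\sum_C(|V(C)|-1)=|V(\P)|-\comp(\P)$, so it suffices to show
\[
|V(\P)\cap(S\cup T)|-\comp(\P)\le 2\sum_{v\in S}\ell_v .
\]
I would charge vertices of $S\cup T$ to sources. If a path $P_i$ is nontrivial, then $s_i$ has an outgoing edge of $\P$. Therefore the number of distinct sinks $t_i\neq s_i$ attached to a source $v$ is at most $\ell_v$. Summing over sources gives $|S\cup T|\le |S|+\sum_{v\in S}\ell_v$.

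It remains to show $|S|-\comp(\P)\le\sum_{v\in S}\ell_v$, i.e.\ that surplus sources are paid for by their $\ell_v$. Sources with $\ell_v\ge 1$ pay for themselves. For a source $v$ with $\ell_v=0$, all its paths are trivial, so $s_i=t_i=v$. Such a $v$ is either a singleton component of $\P$ (and then it is absorbed by the $-\comp(\P)$ term) or it lies on some other path.

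\emph{Main obstacle.} The hard step is this last accounting, namely sources with $\ell_v=0$ that lie in a nonsingleton component. I would handle it by choosing $J$ more carefully than an arbitrary tree join. Such a vertex $v$ has even $\P$-degree parity contribution from its trivial paths, so it behaves like an internal vertex. I would count it in the $|V(\P)\setminus(S\cup T)|$-type budget, using that every vertex lying on a nontrivial path $P_j$ is reached by an edge leaving its predecessor. If that fails, I would fall back on the slack of the factor $2$ in $2\sum_{v\in S}\ell_v$: charge each nontrivial path $P_j$ an amount $2\ge 1+1$ to its source, covering its sink and one extra vertex that it meets.
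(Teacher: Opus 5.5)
\textbf{Verdict: genuine gap in the final counting step.} Your feasibility argument and the bound $|J|\le|V(\P)|-\comp(\P)$ via tree joins in each component match the paper.

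\textbf{Where it breaks.} You split the budget $2\sum_{v\in S}\ell_v$ into one copy for $T\setminus S$ and one copy for $S$, so you reduce to $|S|-\comp(\P)\le\sum_{v\in S}\ell_v$. This inequality is false. Take sampled paths $P_1=u_1xa_1$ and $P_2=u_2xa_2$ together with trivial pairs $(a_1,a_1)$ and $(a_2,a_2)$. Then:
\begin{itemize}
\item $S=\{u_1,u_2,a_1,a_2\}$, with $\ell_{u_1}=\ell_{u_2}=1$ and $\ell_{a_1}=\ell_{a_2}=0$;
\item $\comp(\P)=1$;
\item so $|S|-\comp(\P)=3>2=\sum_{v\in S}\ell_v$, while the lemma itself holds here ($4\le 5$).
\end{itemize}
So the ``main obstacle'' cannot be resolved inside your decomposition, and none of your suggested fixes help:
\begin{itemize}
\item A source cannot be charged to $|V(\P)\setminus(S\cup T)|$.
\item Choosing $J$ differently changes nothing, since what remains is a purely combinatorial inequality about $\P$.
\item Charging $2$ per nontrivial path fails. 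A single nontrivial path can pass through arbitrarily many sources of trivial pairs, and each of these must be paid from its own $\ell_v\ge 1$.
\end{itemize}

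\textbf{The missing observation.} $\ell_v$ counts \emph{all} edges of $\P$ leaving $v$, not only those of $v$'s own paths. Hence a vertex with $\ell_v=0$ that lies on a nontrivial sampled path must be the final vertex of that path. The correct accounting, as in the paper, has three parts:
\begin{itemize}
\item Vertices of $S\cup T$ lying on no nontrivial path are singleton components of $\P$, so they are absorbed by $\comp(\P)$.
\item Sources with $\ell_v\ge 1$ are paid by one copy of $\sum_{v\in S}\ell_v$.
\item Every other vertex of $S\cup T$ on a nontrivial path is the final vertex of some nontrivial sampled path. Such a vertex is either in $T\setminus S$ (take its own path) or a source with $\ell_v=0$ (take any nontrivial path through it). Distinct vertices give distinct paths, and there are at most $\sum_{v\in S}\ell_v$ nontrivial paths, counted via their first edges leaving their sources.
\end{itemize}
Thus the second copy of the budget must be \emph{shared} between $T\setminus S$ and the sources with $\ell_v=0$, which is exactly what your split rules out.
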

The proof of this lemma first shows that $|J| \leq  |V(\P)| - |D'|$. When expanding the contracted components, the only vertices with odd degree are vertices in $V(\P)$ and every component has an even number of such defects. Parity correction can then be done with a join contained in a spanning tree of that component. Summing over the different components yields $|V(\P)| - |D'|$. The inequality $|V(\P)| - |D'| \leq |V(\P) \setminus (S\cup T)| + 2\sum_{v \in S} \ell_v$ is a straightforward calculation. Feasibility follows as each component of $F_2$ contains at least one source-sink pair $(s_i, t_i)$ and parity is correct by construction. We defer the complete proof to Appendix \ref{proof:cost_parity_f2}.

By summing up the costs of the sampling step, the solution of the auxiliary instance and the parity correction step we get the following bound for the expected cost of $F_2$.

\begin{lemma}
\label{lem:cost_F2}
    The solution $F_2$ has an expected cost of $\mathbb{E}[|F_2|] \leq \tfrac{3}{2}|\OPT\,| + \Phi$.
\end{lemma}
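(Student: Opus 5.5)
We decompose $|F_2| = |\P| + |\ALG(I')| + |J|$ and bound the expectation of each term separately, using linearity of expectation.

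\textbf{Sampling term.} By \eqref{eq:bounds_lv}, $\mathbb{E}[|\P|] = \sum_{v} z_v^P$.

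\textbf{Auxiliary instance.} \Cref{thm:gmdm-tsp-general} gives $|\ALG(I')| \le \tfrac32 |\OPT(I')|$ pointwise for every outcome of the sampling. \Cref{obs:obj_value_aux_instance} gives $|\OPT(I')| \le |\OPT(I)|$. Together, $|\ALG(I')| \le \tfrac32|\OPT|$ deterministically, so also in expectation.

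\textbf{Parity correction.} By \Cref{lem:f2_cost_join},
\[
|J| \le |V(\P)\setminus(S\cup T)| + 2\sum_{v\in S}\ell_v .
\]
The first term counts non-terminal vertices covered by some sampled path. Its expectation is $\sum_{v\notin S\cup T}(1-q_v)$ by the definition of $q_v$ in \eqref{eq:bounds_qv}. For the second term, $\mathbb{E}[\ell_v]=z_v^P$, so its expectation is $2\sum_{v\in S} z_v^P$.

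\textbf{Combining.} Adding the three bounds gives
\[
\mathbb{E}[|F_2|] \le \tfrac32|\OPT| + \sum_{v\notin S\cup T}\bigl(z_v^P + (1-q_v)\bigr) + \sum_{v\in S} 3z_v^P + \sum_{v\in T\setminus S} z_v^P .
\]
For $v\in T\setminus S$ we use the trivial bound $z_v^P \le 3z_v^P$. For $v\in S\cup T$ the coefficient is then at most $3$, which matches $\varphi_v = 3z_v^P$ there. For $v\notin S\cup T$ the contribution is exactly $\varphi_v$. Summing over all vertices yields $\mathbb{E}[|F_2|] \le \tfrac32|\OPT| + \Phi$.

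\textbf{Where care is needed.} The main subtlety is vertices that are both a source and a sink, or that serve as terminals of several pairs. Here $S$ and $T$ are sets, so each such vertex is counted once in $S\cup T$. Its coefficient in the sum is at most $1+2=3$, which is still covered by $\varphi_v = 3z_v^P$. No probabilistic dependence issues arise, because every step uses only linearity of expectation together with a deterministic bound on $|\ALG(I')|$.
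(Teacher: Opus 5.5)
Your proof is correct and follows the paper's argument: decompose $|F_2| = |\P| + |\ALG(I')| + |J|$, bound $|\ALG(I')| \le \tfrac32|\OPT|$ deterministically via \cref{thm:gmdm-tsp-general} and \cref{obs:obj_value_aux_instance}, apply \cref{lem:f2_cost_join} to $|J|$, and take expectations using $\mathbb{E}[\ell_v] = z_v^P$ and $\mathbb{P}[v \in V(\P)] = 1-q_v$. Your explicit bound $z_v^P \le 3z_v^P$ for $v \in T\setminus S$ is exactly the slack the paper uses implicitly when it writes $3\sum_{v\in S\cup T} z_v^P$.
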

\begin{proof}
    We have proved feasibility of $F_2$ in \cref{lem:f2_cost_join}. To bound $|F_2| = |\P| + |\ALG(I')| + |J|$ we use \cref{eq:bounds_lv} for $|\P|$, \cref{thm:gmdm-tsp-general} and \cref{obs:obj_value_aux_instance} for $|\ALG(I')|$ and \cref{lem:f2_cost_join} for $|J|$ and obtain
    \begin{equation}
    \label{eq:bound_F2}
    |F_2| \leq \sum_{v\in V}\ell_v + \frac{3}{2}|\OPT| + \left(|V(\P) \setminus (S\cup T)| + 2 \sum_{v \in S } \ell_v \right).
    \end{equation}
    Taking expectations in \eqref{eq:bound_F2} and using Equations \eqref{eq:bounds_lv} and \eqref{eq:bounds_qv}  yields
    \[\mathbb{E}[|F_2|] \leq \frac{3}{2}|\OPT| + \sum_{v\not\in S\cup T} (z_v^P + 1 - q_v) + 3\sum_{v\in S\cup T} z_v^P = \frac{3}{2}|\OPT| + \Phi. \hfill\qedhere\]
\end{proof}

\subsection{Combining the Two Candidate Solutions}\label{sec:combining_cases_new}

Our algorithm returns the cheaper one of the two candidate solutions $F_1$ and $F_2$ constructed in Sections \ref{sec:first_candidate} and \ref{sec:second_candidate}, respectively. The proof of the approximation ratio is a straightforward combination of Lemmas \ref{lem:cost_F1} and \ref{lem:cost_F2}.

\begin{proof}[Proof of \cref{thm:multipath_improved}]
\cref{alg:multi_path} returns a solution of expected cost 
\[\mathbb{E}[|F|] \leq \min\{\mathbb{E}[|F_1|], \mathbb{E}[|F_2|]\} \leq \min\left\{2|\OPT| - \frac{e-2}{2e-1}\Phi, \ \frac{3}{2}|\OPT| + \Phi\right\}.\]
To upper bound this minimum in terms of $|\OPT|$, we let $x \coloneqq \Phi/|\OPT|$ and compute the intersection of the two lines
\[f_1(x) = 2 - \frac{e-2}{2e-1}x \quad\text{and}\quad f_2(x) = \frac{3}{2} + x.\]
These lines intersect at $x = \tfrac{2e-1}{6e-6}$, and plugging in this value for $\Phi$ yields
\[\min\{\mathbb{E}[|F_1|], \mathbb{E}[|F_2|]\} \leq \left(\frac{3}{2}+\frac{2e-1}{6e-6}\right)|\OPT| = \frac{11e-10}{6e-6}|\OPT| < 1.9304\cdot |\OPT|.\hfill\qedhere\]
\end{proof}

\section*{Acknowledgment of Usage of AI}
We proved \cref{thm:gmdm-tsp-general} without any use of AI. For the setting of subcubic graphs considered in \cref{sec:gmdmtsp_subcubic}, we presented Claude Fable 5 with a preliminary version of \cref{thm:gmdm-tsp-subcubic}. This version was weaker in two ways: We only considered subcubic graphs where depots had a degree of at most two, and our approximation guarantee included an additional additive term depending on the structure of an optimal solution. Claude Fable 5 found and fixed mistakes in our proof and managed to avoid the degree constraint for depot vertices. Subsequently, with ChatGPT 5.6 Sol we got rid of the additive term in the approximation guarantee. For \GraphicMultiPath considered in \cref{sec:graphic_multpath}, we obtained a 1.9522-approximation without using AI. We presented ChatGPT 5.6 Sol with this result and the idea of the algorithm presented in \cref{sec:graphic_multpath}. Given these, the model found an improved analysis yielding \cref{thm:multipath_improved}.

\bibliography{bibliography}

@InProceedings{2010_1_5_approx_mdmtsp,
    author="Xu, Zhou and Rodrigues, Brian",
    editor="Kaplan, Haim",
    title="A 3/2-Approximation Algorithm for Multiple Depot Multiple Traveling Salesman Problem",
    booktitle="Algorithm Theory - SWAT 2010",
    year="2010",
    publisher="Springer Berlin Heidelberg",
    address="Berlin, Heidelberg",
    pages="127--138",
    isbn="978-3-642-13731-0"
}

@article{2017_xu_rodrigues,
    title = {An extension of the {C}hristofides heuristic for the generalized multiple depot multiple traveling salesmen problem},
    journal = {European Journal of Operational Research},
    volume = {257},
    number = {3},
    pages = {735-745},
    year = {2017},
    issn = {0377-2217},
    doi = {https://doi.org/10.1016/j.ejor.2016.08.054},
    url = {https://www.sciencedirect.com/science/article/pii/S0377221716306932},
    author = {Zhou Xu and Brian Rodrigues},
}

@InProceedings{graphic_multipath_paper,
    author = {Morteza Alimi and Niklas Dahlmeier and Tobias Mömke and Philipp Pabst and Laura Vargas Koch},
    title = {Approximating Graphic Multi-Path {TSP} and Graphic Ordered {TSP}},
    booktitle = {2026 SIAM Symposium on Simplicity in Algorithms (SOSA)},
    chapter = {},
    pages = {524-537},
    doi = {10.1137/1.9781611978964.43},
    URL = {https://epubs.siam.org/doi/pdf/10.1137/1.9781611978964.43},
    eprint = {2509.00448},
    publisher = {Society for Industrial and Applied Mathematics},
    year = {2026},
}

@InProceedings{deppertKaulMnich_MDMTSP,
  author =	{Deppert, Max and Kaul, Matthias and Mnich, Matthias},
  title =	{A ($3/2 + \varepsilon$)-Approximation for Multiple {TSP} with a Variable Number of Depots},
  booktitle =	{31st Annual European Symposium on Algorithms (ESA 2023)},
  pages =	{39:1--39:15},
  series =	{Leibniz International Proceedings in Informatics (LIPIcs)},
  ISBN =	{978-3-95977-295-2},
  ISSN =	{1868-8969},
  year =	{2023},
  volume =	{274},
  editor =	{G{\o}rtz, Inge Li and Farach-Colton, Martin and Puglisi, Simon J. and Herman, Grzegorz},
  publisher =	{Schloss Dagstuhl -- Leibniz-Zentrum f{\"u}r Informatik},
  address =	{Dagstuhl, Germany},
  URL =		{https://drops.dagstuhl.de/entities/document/10.4230/LIPIcs.ESA.2023.39},
  URN =		{urn:nbn:de:0030-drops-186925},
  doi =		{10.4230/LIPIcs.ESA.2023.39}
}

@article{mömke_svensson,
author = {M{\"o}mke, Tobias and Svensson, Ola},
title = {Removing and Adding Edges for the Traveling Salesman Problem},
year = {2016},
issue_date = {March 2016},
publisher = {Association for Computing Machinery},
address = {New York, NY, USA},
volume = {63},
number = {1},
issn = {0004-5411},
url = {https://doi.org/10.1145/2739008},
doi = {10.1145/2739008},
journal = {Journal of the ACM},
month = feb,
articleno = {2},
numpages = {28}
}

@inproceedings{böhm2026approximatingtravelingsalesmanproblems,
    author = {Martin Böhm and Zachary Friggstad and Tobias Mömke and Joachim Spoerhase},
    title = {Approximating Traveling Salesman Problems Using a Bridge Lemma},
    booktitle = {Proceedings of the 2025 Annual ACM-SIAM Symposium on Discrete Algorithms (SODA)},
    chapter = {},
    pages = {1166-1177},
    doi = {10.1137/1.9781611978322.34},
    URL = {https://epubs.siam.org/doi/abs/10.1137/1.9781611978322.34},
    year = {2025},
}

@article{seboVygen2014graphic,
author = {Seb{\H o}, Andr{\'a}s and Vygen, Jens},
title = {Shorter tours by nicer ears: 7/5-Approximation for the graph-{TSP}, 3/2 for the path version, and 4/3 for two-edge-connected subgraphs},
year = {2014},
issue_date = {October 2014},
publisher = {Springer-Verlag},
address = {Berlin, Heidelberg},
volume = {34},
number = {5},
issn = {0209-9683},
url = {https://doi.org/10.1007/s00493-014-2960-3},
doi = {10.1007/s00493-014-2960-3},
journal = {Combinatorica},
month = oct,
pages = {597–629},
numpages = {33}
}

@article{XU2011218,
title = {An analysis of the extended {C}hristofides heuristic for the $k$-depot {TSP}},
journal = {Operations Research Letters},
volume = {39},
number = {3},
pages = {218-223},
year = {2011},
issn = {0167-6377},
doi = {https://doi.org/10.1016/j.orl.2011.03.002},
url = {https://www.sciencedirect.com/science/article/pii/S0167637711000290},
author = {Zhou Xu and Liang Xu and Brian Rodrigues}
}

@book{Traub_Vygen_2024,
place={Cambridge},
title={Approximation Algorithms for Traveling Salesman Problems},
publisher={Cambridge University Press},
author={Traub, Vera and Vygen, Jens},
year={2024}
}

@inproceedings{OveisGharanSaberiSingh2011GraphicTSP,
author = {Gharan, Shayan Oveis and Saberi, Amin and Singh, Mohit},
title = {A Randomized Rounding Approach to the Traveling Salesman Problem},
year = {2011},
isbn = {9780769545714},
publisher = {IEEE Computer Society},
address = {USA},
url = {https://doi.org/10.1109/FOCS.2011.80},
doi = {10.1109/FOCS.2011.80},
booktitle = {Proceedings of the 2011 IEEE 52nd Annual Symposium on Foundations of Computer Science},
pages = {550–559},
numpages = {10},
series = {FOCS '11}
}

@InProceedings{mucha2012graphicTSP,
  author =	{Mucha, Marcin},
  title =	{{13/9-approximation for Graphic TSP}},
  booktitle =	{29th International Symposium on Theoretical Aspects of Computer Science (STACS 2012)},
  pages =	{30--41},
  series =	{Leibniz International Proceedings in Informatics (LIPIcs)},
  ISBN =	{978-3-939897-35-4},
  ISSN =	{1868-8969},
  year =	{2012},
  volume =	{14},
  editor =	{D\"{u}rr, Christoph and Wilke, Thomas},
  publisher =	{Schloss Dagstuhl -- Leibniz-Zentrum f{\"u}r Informatik},
  address =	{Dagstuhl, Germany},
  URL =		{https://drops.dagstuhl.de/entities/document/10.4230/LIPIcs.STACS.2012.30},
  URN =		{urn:nbn:de:0030-drops-34025},
  doi =		{10.4230/LIPIcs.STACS.2012.30}
}

@InProceedings{dvorakKralMohar2017cubic,
  author =	{Zden{\v{e}}k Dvo{\v{r}}{\'a}k and Daniel Kr{\'a}l and Bojan Mohar},
  title =	{{Graphic TSP in Cubic Graphs}},
  booktitle =	{34th Symposium on Theoretical Aspects of Computer Science (STACS 2017)},
  pages =	{27:1--27:13},
  series =	{Leibniz International Proceedings in Informatics (LIPIcs)},
  ISBN =	{978-3-95977-028-6},
  ISSN =	{1868-8969},
  year =	{2017},
  volume =	{66},
  editor =	{Heribert Vollmer and Brigitte Vall{\'e}e},
  publisher =	{Schloss Dagstuhl--Leibniz-Zentrum fuer Informatik},
  address =	{Dagstuhl, Germany},
  URL =		{http://drops.dagstuhl.de/opus/volltexte/2017/7006},
  URN =		{urn:nbn:de:0030-drops-70068},
  doi =		{10.4230/LIPIcs.STACS.2017.27}
}

@article{WigalYooYu2021cubic,
title = {Approximating {TSP} walks in subcubic graphs},
journal = {Journal of Combinatorial Theory, Series B},
volume = {158},
pages = {70-104},
year = {2023},
issn = {0095-8956},
doi = {https://doi.org/10.1016/j.jctb.2022.09.002},
url = {https://www.sciencedirect.com/science/article/pii/S0095895622000879},
author = {Michael C. Wigal and Youngho Yoo and Xingxing Yu},
}

@article{serdyukov,
  author   = {Serdyukov, A. I.},
  title    = {Some extremal bypasses in graphs},
  journal  = {Upravlyaemye Sistemy},
  volume   = {17},
  pages    = {76--79},
  year     = {1978},
  language = {russian},
  note     = {In Russian}
}

@techreport{christofides,
  author      = {Christofides, Nicos},
  title       = {Worst-case analysis of a new heuristic for the traveling salesman problem},
  institution = {Graduate School of Industrial Administration, Carnegie-Mellon University},
  address     = {Pittsburgh},
  type        = {Technical Report},
  number      = {388},
  year        = {1976},
  note        = {Reprinted in \emph{Operations Research Forum} 3 (2022), Article 20,
                 \url{https://doi.org/10.1007/s43069-021-00101-z}}
}

@article{an_kleinberg_shmoys_s_t_path_best_of_many_christofides,
    author = {An, Hyung-Chan and Kleinberg, Robert and Shmoys, David B.},
    title = {Improving {C}hristofides' Algorithm for the $s$-$t$ Path {TSP}},
    year = {2015},
    issue_date = {November 2015},
    publisher = {Association for Computing Machinery},
    address = {New York, NY, USA},
    volume = {62},
    number = {5},
    issn = {0004-5411},
    url = {https://doi.org/10.1145/2818310},
    doi = {10.1145/2818310},
    journal = {Journal of the ACM},
    month = nov,
    articleno = {34},
    numpages = {28}
}

@article{traub_t_tours_best_of_many_christofides,
    title = {Improving on {Best-of-Many-Christofides} for ${T}$-tours},
    journal = {Operations Research Letters},
    volume = {48},
    number = {6},
    pages = {798-804},
    year = {2020},
    issn = {0167-6377},
    doi = {https://doi.org/10.1016/j.orl.2020.09.009},
    url = {https://www.sciencedirect.com/science/article/pii/S0167637720301474},
    author = {Vera Traub},
}

@misc{one_plus_square_root_paper,
      title={A $(1+1/\sqrt{2})$-Approximation for the Multiple-Depot Traveling Salesman Problem}, 
      author={Jingyang Zhao and Yuxi Liu and Mingyu Xiao},
      year={2026},
      eprint={2609.19537},
      archivePrefix={arXiv},
      primaryClass={cs.DS},
      url={https://arxiv.org/abs/2609.19537}, 
}

@misc{zhaoXuSiao2026betterThanThreeHalf,
      title={Improved Approximation Algorithms for the Multiple-Depot Split Delivery Vehicle Routing Problem}, 
      author={Jingyang Zhao and Yonghang Su and Mingyu Xiao},
      year={2026},
      eprint={2601.01841},
      archivePrefix={arXiv},
      primaryClass={cs.DS},
      url={https://arxiv.org/abs/2601.01841}, 
}

@inproceedings{reducing_path_tsp_to_tsp,
    author = {Traub, Vera and Vygen, Jens and Zenklusen, Rico},
    title = {Reducing path {TSP} to {TSP}},
    year = {2020},
    isbn = {9781450369794},
    publisher = {Association for Computing Machinery},
    address = {New York, NY, USA},
    url = {https://doi.org/10.1145/3357713.3384256},
    doi = {10.1145/3357713.3384256},
    booktitle = {Proceedings of the 52nd Annual ACM SIGACT Symposium on Theory of Computing},
    pages = {14–27},
    numpages = {14},
    location = {Chicago, IL, USA},
    series = {STOC 2020}
}

\newpage
\appendix

\section{\texorpdfstring{Proofs from \cref{sec:gmdmtsp_subcubic}}{Proofs from Section 3}}\label{sec:appendix_sec_3}
\existenceofcheapjoins*
\begin{proof}
    For $H = (V,E)$ we consider the following $T$-join polytope which is the convex hull of the incidence vectors of $\odd(H)$-joins by Theorem 2.21 from \cite{Traub_Vygen_2024}.
    \begin{align*}
  \lvert F\rvert - x(F) + x(\delta(U)\setminus F) &\ge  1
    &&(U \subseteq V,\; F \subseteq \delta(U), (| U \cap \odd(H)| + |F| \odd )\\
  x_e &\le 1 &&(e \in E) \\
  x_e &\ge 0 &&(e \in E). 
\end{align*}
    Let $x \in [0,1]^E$ with $x(e) = \frac{1}{3}$ for all $e \in E$. The first inequality reduces to $\frac{1}{3} |F| + \frac{1}{3}|\delta(U)| \geq 1$. By Lemma 2.20 in \cite{Traub_Vygen_2024}, we obtain $|\delta(U)| \equiv |U \cap \odd(H)| \mod 2$ and hence that $|\delta(U)| + |F|$ is odd. Combining this with $|\delta(U)| \geq 2$, since $H$ is $2$-edge-connected, we conclude $|\delta(U)| + |F| \geq 3$. Thus, $x$ is in our polytope, and for a min cost $\odd(H)$-join $J$ we have $c(J) \leq \frac{1}{3}c(E(H))$.
\end{proof}

\section{\texorpdfstring{Proofs from \Cref{sec:graphic_multpath}}{Proofs from Section 4}}\label{sec:appendix_sec_4}

Before proving the missing lemmas from \cref{sec:graphic_multpath}, we begin by giving a pseudocode description of our algorithm and the simplified version of the LP used in \cite{graphic_multipath_paper}.

\begin{algorithm}[h]
\caption{1.9304-approximation for Graphic Multi-Path TSP}\label{alg:multi_path}
\textbf{Input} Graph $G = (V,E)$, source-sink pairs $(s_i, t_i)_{i \in [k]}$
\begin{algorithmic}[1]
\vspace*{0.3cm}

\Statex \noindent\textbf{Sampling Phase:}
\State{Compute an optimal fractional solution $(x,z)$ to (LP).}
\State{Decompose each $x_i$ into paths $P_i^j$ and cycles $C_i^j$, discard the cycles.}
\State{For each $i\in [k]$ sample one path $P_i$ from the decomposition with probabilities equal to the coefficients $\lambda_i^j$.}
\State{By $\P$ denote the multiset that is the union of all $P_i$.}
\vspace*{0.3cm}

\Statex \noindent\textbf{Construction of $F_1$:}
\State{Greedily reconnect each vertex not covered by $\P$ for a cost of 2 per vertex.}

\vspace*{0.3cm}
\Statex \noindent\textbf{Construction of $F_2$:}
\State{Construct an auxiliary \Problem-instance $I' = (G', D')$ by contracting the components of $\mathcal{P}$ into a single vertex each, and letting $D'$ be the set of these contracted vertices.}
\State{Compute a $3/2$-approximation for $I'$ using \cref{thm:gmdm-tsp-general}.}
\State{Expand the set of contracted vertices $D'$ in the computed solution back to the union of paths $\mathcal{P}$ to obtain $\tilde{F_2}$.}
\State{Perform the parity correction described in \cref{lem:f2_cost_join} to obtain $F_2$.}
\vspace*{0.3cm}

\noindent\Return{the cheaper solution $F$ among $F_1$ and $F_2$.}
\end{algorithmic}
\end{algorithm}

$\begin{alignedat}{3} \label{LP}
 & \text{minimize} & \sum_{v \in V} z_v & \\
 & \text{s.t.} \ & x_i \text{ is a $s_i$-$t_i$-flow of value 1} &\qquad\forall i\in[k], \\
 && z_v = \sum_{i\in[k]} x_i(\delta^{+}(v)) & \qquad \forall v \in V, \\
 && z_{v} \geq 1 & \qquad \forall v \in V \setminus T.
\end{alignedat}$

\lemmacostfone*
\begin{proof}
\label{proof:cost_f1}
    Throughout the analysis we write $\gamma \coloneqq \tfrac{e-2}{2e-1} \approx 0.1619$. By definition of $\gamma$ it holds that
    \begin{equation}
        \label{eq:prop_gamma}
        1-2\gamma - (2-\gamma)e^{-1} = 0.
    \end{equation}
    The expected cost of $|F_1|$ is given by
    \[\mathbb{E}[|F_1|] \leq \sum_{v\in V} z_v^P + 2\sum_{v\not\in S\cup T} q_v,\]
    with the first summand accounting for the cost of the sampling step and the second summand accounting for the cost of the reconnection step. 
    
    It suffices to show that each vertex contributes at most $2z_v - \gamma\varphi_v$ to the right hand side of the inequality above. The statement then follows by summing over all $v \in V$. 
    
    First, consider a terminal vertex $v \in S \cup T$. Using $3\gamma < 1$, the cost of $v$ for the algorithm is at most
    \[z_v^P \leq 2z_v - z_v^P \leq 2z_v - 3\gamma z_v^P = 2z_v - \gamma\varphi_v.\]

    Now consider the non-terminal vertices $v \not\in S\cup T$. They incur a cost of $z_v^P + 2q_v$ for the algorithm, and thus we need to show that $z_v^P + 2q_v \leq 2z_v - \gamma\varphi_v$, or equivalently
    \[2z_v - z_v^P - 2q_v \geq \gamma(z_v^P + (1 - q_v)).\]
    Note that $q_v$ has coefficient $-2$ on the left hand side and $-\gamma$ on the right hand side. Since $-(2-\gamma)<0$, we may assume that $q_v$ is maximized.

The following bound on $q_v$ (the probability that $v$ is not covered by a path) is shown in \cite{graphic_multipath_paper}. It also follows directly as start and end vertices are by definition part of a path. Roughly speaking, for all other vertices, their total path flow value $z_v^P$ distributes to paths with some fraction $\lambda_j$ and the probability of not being covered is then $\prod (1-\lambda_j) \leq e^{-z_v^P}$. Thus, we have
   $q_v \leq e^{-z_v^P}$, and can now assume  $q_v = e^{-z_v^P}$. \\

    \noindent\emph{Case 1: $z_v^P \leq 1$.} We use $z_v \geq 1$ for all non-terminal vertices. Thus, it suffices to show that $g \geq 0$ where
    \[g(z_v^P) \coloneqq (2-\gamma) - (1+\gamma)z_v^P - (2-\gamma)e^{-z_v^P}.\]
    This is a concave function on $[0,1]$, so we only have to verify $g(0) \geq 0$ and $g(1) \geq 0$. Clearly, we have $g(0) = 0$, and
    \[g(1) = 1 - 2\gamma - (2-\gamma)e^{-1} = 0\]
    by \cref{eq:prop_gamma}. \\

    \noindent\emph{Case 2: $z_v^P \geq 1$.} We use $z_v \geq z_v^P$, and hence, it suffices to show that $h \geq 0$ where
    \[h(z_v^P) = (1-\gamma)z_v^P - \gamma - (2-\gamma)e^{-z_v^P}.\]
    The derivative of this function fulfills $h'(z_v^P) = (1-\gamma) + (2-\gamma)e^{-z_v^P} > 0$ for all $z_v^P \geq 1$. Since $h(1) = g(1) = 0$ we conclude $h \geq 0$ which finishes the proof.
\end{proof}

\lemmaftwocostjoin*

\begin{proof}
\label{proof:cost_parity_f2}
We split the proof into three parts. First, we show $|J| \leq  |V(\P)| - |D'| = |V(\P)| - \operatorname{comp}(V(\P), \P)$, subsequently we prove that $|V(\P)| - \operatorname{comp}(V(\P), \P) \leq |V(\P) \setminus (S\cup T)| + 2\sum_{v \in S} \ell_v$ and lastly we show feasibility of the solution.

Recall that by $W$ we denote the set of vertices that have an odd degree in $\ALG(I')$. As $\ALG(I')$ is Eulerian in $G'$ and expanding the contracted vertices in $I'$ does not change the degree of vertices that are not part of $\mathcal{P}$, we must have $W \subseteq V(\P)$. Furthermore, let $C$ be an arbitrary component of $\P$. $|W \cap V(C)|$ must be even since the vertex $v_C$ that $C$ got contracted into in $I'$ has even degree in $\ALG(I')$.

Fix some arbitrary spanning tree $T_C$ of $C$, and let $J_C \subseteq T_C$ be the unique $(W \cap V(C))$-join in $T_C$. As $J_C$ is a subset of $T_C$, we must have $|J_C|\leq |V(C)|-1$. Setting $J = \bigcup_C J_C$ it follows that
\[
    |J| \leq \sum_C (|V(C)|-1) = |V(\P)| - \operatorname{comp}(V(\P), \P).
\]

\begin{claim}
\label{lem:f2_cost_terminals}
    It holds that $|V(\P)|  - \operatorname{comp}(V(\P), \P) \leq |V(\P) \setminus (S\cup T)| + 2\sum_{v \in S} \ell_v$.
\end{claim}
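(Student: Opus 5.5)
The plan is to subtract $|V(\P)\setminus(S\cup T)|$ from both sides and prove the resulting terminal-counting inequality one component at a time. Every pair $i$ contributes a sampled path $P_i$ containing $s_i$ and $t_i$, so $S\cup T\subseteq V(\P)$. The claim is therefore equivalent to
\[
|S\cup T| - \operatorname{comp}(V(\P),\P) \;\le\; 2\sum_{v\in S}\ell_v .
\]
Both sides split over the components $C$ of $(V(\P),\P)$, since $\ell_v$ only counts edges of $\P$ leaving $v$. It thus suffices to show, for each component $C$,
\[
|(S\cup T)\cap V(C)| - 1 \;\le\; 2\sum_{v\in S\cap V(C)} \ell_v .
\]

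I will use that the paths in the flow decomposition are simple directed $s_i$-$t_i$ paths, so $P_i$ is a single vertex exactly when $s_i=t_i$. First I handle a component $C$ in which every path is trivial. Then $C$ is a single vertex, so the left-hand side is $0$ and the inequality holds.

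Otherwise $C$ contains at least one nontrivial path. I classify each terminal vertex $v\in V(C)$ into one of two types.
\begin{itemize}
\item Type (a): $v\in S$ and $\ell_v\ge 1$.
\item Type (b): $v\notin S$, or $\ell_v=0$.
\end{itemize}
The type (a) vertices number at most $\sum_{v\in S\cap V(C)}\ell_v$.

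For type (b), first suppose $v\in S$ with $\ell_v=0$. Then no edge of $\P$ leaves $v$. Since $C$ is connected, contains edges, and every vertex of a nontrivial path other than its last one has an outgoing edge, $v$ must be the final vertex $t_j$ of some nontrivial path $P_j$ in $C$. If instead $v\in T\setminus S$, then $v=t_j$ for some $j$, and $P_j$ lies in $C$. If $P_j$ is nontrivial, $v$ is again the endpoint of a nontrivial path. If $P_j$ is trivial, then $v=s_j\in S$, which contradicts $v\notin S$.

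So every type (b) vertex is the endpoint $t_j$ of a nontrivial path $P_j$ in $C$. Each such path leaves its start $s_j\in S$ with its first edge, so the number of nontrivial paths in $C$ is at most $\sum_{v\in S\cap V(C)}\ell_v$. Combining the two types gives
\[
|(S\cup T)\cap V(C)| \;\le\; 2\sum_{v\in S\cap V(C)}\ell_v,
\]
which is even stronger than needed, and summing over components proves the claim.

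The step I expect to be delicate is the bookkeeping for terminals of trivial pairs ($s_i=t_i$) lying inside a nontrivial component. It is the case distinction on $\ell_v$ that lets me charge such a vertex either to its own outgoing edges or to a nontrivial path ending there. I also need simplicity of the sampled paths. If the decomposition could produce closed $s_i$-$s_i$ walks, I would instead argue that every such walk has an edge leaving $s_i$, which again charges $s_i$ to $\ell_{s_i}$.
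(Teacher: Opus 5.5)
Your proof is correct and follows essentially the same route as the paper. The paper also absorbs the isolated terminals of trivial paths into the component count, bounding the left-hand side by the number of vertices on non-empty paths, and then charges each remaining terminal either to its own outgoing edges (a source with $\ell_v>0$) or injectively to a non-empty sampled path ending at it, of which there are at most $\sum_{v\in S}\ell_v$; your per-component bookkeeping is just a localized version of that global count.
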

\begin{claimproof}
    Let $v \in V(\P)$ be some isolated terminal vertex in $\P$. Observe that we do not have to pay for the vertex $v$ on the left hand side since it forms a component on its own. Let $\overline{V(\P)}$ denote the set of vertices in $V(\P)$ that are part of some non-empty sampled path. Then,
\begin{align*}
    |V(\mathcal P)| - \operatorname{comp}(V(\mathcal P),\mathcal P)
    &\le |\overline{V(\mathcal P)}| \\
    &\le |\overline{V(\mathcal P)} \setminus (S \cup T)|
       + |\overline{V(\mathcal P)} \cap (S \cup T)|.
\end{align*}
Clearly, the number of vertices
$v \in \overline{V(\mathcal P)} \cap S$ with $\ell_v>0$
is bounded by $\sum_{v\in S}\ell_v$.
For each vertex $t \in \overline{V(\mathcal P)} \cap (S \cup T)$
not counted above, it must be the final vertex of some non-empty
sampled path, of which there are at most $\sum_{v\in S}\ell_v$,
since each contributes an edge leaving its source.
Indeed, either $t\in T\setminus S$, so its own commodity path
is non-empty, or $t\in S$ and $\ell_t=0$, so any non-empty
sampled path containing $t$ ends there.
Thus, $|\overline{V(\mathcal P)}\cap(S\cup T)|$ is bounded by
$2\sum_{v\in S}\ell_v$. This finishes the proof.
\end{claimproof}
By construction, every vertex has even degree in $\ALG(I') \disjcup J$. It remains to verify that $F_2 = \P \, \dot{\cup} \, \ALG(I') \, \dot{\cup} \, J$ is a feasible solution.
For this, iterate over all connected components $C$ of $\ALG(I') \, \dot{\cup} \, J$. $C$ has even degree at every vertex and is connected and thus has an Euler tour. Additionally, $C$ must include at least one vertex from $\mathcal{P}$ since the contracted components of $\P$ are exactly the depots of $I'$. Choose some sampled path $P_i$ for which $C$ intersects $P_i$, and attach the Euler tour of $C$ to $P_i$. The result is a new $s_i$-$t_i$ walk $P_i'$. After iterating through all components $C$, this results in a collection of $s_i$-$t_i$-walks whose multiset union is exactly $F_2$. Additionally, $F_2$ covers all of $V$: The vertices in $\P$ are covered by the original sampled paths, and the vertices in $V \setminus V(\P)$ are covered by $\ALG(I')$.
\end{proof}

\section{Derandomization of the Graphic Multi-Path TSP Algorithm}
\label{app:derandomization}

In this section we describe how \cref{alg:multi_path} can be derandomized using the method of conditional expectations. Notice that (given a fixed optimal LP-solution $(x,z)$) the variable $\Phi$ that we use to compare the two solutions is fully deterministic, i.e., it does not depend on the choice of paths $\P$. Thus, we can derandomize the construction of the two candidate solutions $F_1$ and $F_2$ independently of each other. To derandomize $F_1$, we use the exact approach from \cite{graphic_multipath_paper} and inductively choose a path $P_i$ for each commodity $i \in [k]$ such that
\[\mathbb{E}[|F_1| \ | \ (P_1, \ldots, P_i) \text{ are fixed}] \leq 2\cdot \OPT_{\text{LP}} - \frac{e-2}{2e-1}\Phi\]
which deterministically yields a solution satisfying the cost bound from \cref{lem:cost_F1}.

For $F_2$ we cannot directly use the same approach, since $\mathbb{E}[|F_2|]$ depends on $\ALG(I')$ and thus, the exact expected value conditioned on a set of chosen paths $(P_1, \ldots, P_i)$ is much more difficult to compute. However, it is sufficient to inductively choose paths $P_1, \ldots, P_k$ which maintain the cost bound
\[\mathbb{E}[|F_2| \ | \ (P_1, \ldots, P_i) \text{ are fixed}] \leq \frac{3}{2}|\OPT| + \Phi\]
from \cref{lem:cost_F2} for all $i \in [k]$. Assume that $(P_1, \ldots, P_{i-1})$ are already chosen. Recall that in \cref{lem:cost_F2} we uniformly bound $|\ALG(I')|$ by $\tfrac{3}{2}|\OPT|$ independently of $\P$. Thus, using \cref{eq:bound_F2} it suffices to choose paths $P_i$ for which
\[\mathbb{E}\left[\sum_{v \in V} \ell_v + |V(\P) \setminus (S \cup T)| + 2\sum_{v \in S}\ell_v \ \Big| \ (P_1, \ldots, P_i) \text{ are fixed}\right] \leq \Phi.\]
By induction, this bound holds when choosing $P_i$ at random, and thus for the path $P_i$ minimizing the left hand side the inequality must hold. Furthermore, this minimizing path can be found efficiently since for all $v \in V$ both $\mathbb{E}[\ell_v \ | \ (P_1, \ldots, P_i) \text{ are fixed}]$ and $\Pr[v \text{ is sampled} \ | \ (P_1, \ldots, P_i) \text{ are fixed}]$ can be computed easily. After choosing all paths $(P_1, \ldots, P_k)$, the bound from \cref{lem:cost_F2} still holds, and thus the construction of $F_2$ can be derandomized.

\end{document}